\let\oldoplus\oplus
\renewcommand{\oplus}{\mathop{\oldoplus}\displaylimits}
\newcommand{\bracket}[1]{\left(#1\right)}
\newcommand{\signx}{s_X}
\newcommand{\signz}{s_Z}
\def\cS{\mathcal{S}}
\def\css{\textnormal{CSS}}
\theoremstyle{definition}
\theoremstyle{plain}
\newtheorem{theorem}{Theorem}
\newtheorem{lemma}[theorem]{Lemma}
\newtheorem{corollary}[theorem]{Corollary}
\newtheorem{example}[theorem]{Example}
\newtheorem{proposition}[theorem]{Proposition}
\theoremstyle{remark}
\newtheorem{remark}[theorem]{Remark}
\def\BibTeX{{\rm B\kern-.05em{\sc i\kern-.025em b}\kern-.08em
    T\kern-.1667em\lower.7ex\hbox{E}\kern-.125emX}}
\definecolor{newred}{HTML}{ff382e}
\definecolor{brickred}{HTML}{aa4a44}
\begin{document}

\title{Asymptotically good CSS codes that realize the logical transversal Clifford group fault-tolerantly}

\author{\IEEEauthorblockN{K. Sai Mineesh Reddy and Navin Kashyap}
\IEEEauthorblockA{
{Dept. of ECE}, IISc, Bengaluru \\
\{mineeshk, nkashyap\}@iisc.ac.in}
}

\maketitle

\begin{abstract}
This paper introduces a framework for constructing Calderbank-Shor-Steane (CSS) codes that support fault-tolerant logical transversal $Z$-rotations. Using this framework, we obtain asymptotically good CSS codes that fault-tolerantly realize the logical transversal Clifford group (i.e., transversal single-qubit Clifford gates are realized within a single code block, while transversal two-qubit Clifford gates are realized across two identical code blocks). Furthermore, investigating CSS-T codes, we: (a) demonstrate asymptotically good CSS-T codes wherein the transversal $T$ realizes the logical transversal $S^{\dagger}$; (b) show that the condition $C_2 \ast C_1 \subseteq C_1^{\perp}$ is necessary but not sufficient for CSS-T codes; and (c) revise the characterizations of CSS-T codes wherein the transversal $T$ implements the logical identity and the logical transversal $T$, respectively.
\end{abstract}


\section{Introduction}
Quantum error correcting codes are essential for fault-tolerant quantum computing~\cite{G1997}. An ${[[n,k,d]]}_2$ quantum code encodes $k$ logical qubits into $n$ physical qubits, correcting errors of weight up to $\lfloor \frac{d-1}{2} \rfloor$. A logical operator is a unitary acting on the physical qubits that preserves the code space, thereby realizing a logical gate on the encoded qubits. As physical gates required to implement logical operators are inherently noisy, their implementation must be fault-tolerant. A canonical approach is to employ transversal gates---tensor products of single-qubit unitaries---as they strictly limit error propagation. 

The Clifford group, together with a non-Clifford gate (such as $T$), constitutes a universal gate set for quantum computation (see Section~\ref{preliminaries_and_notation}). While identifying a code that realizes a universal set of logical gates via transversal physical gates is highly desirable, the Eastin-Knill Theorem~\cite{EK2009} precludes this.  In this work, we establish the existence of asymptotically good CSS codes that support the transversal implementation of the logical transversal Clifford group\footnote{By ``logical transversal Clifford group'', we mean that the transversal single-qubit Clifford gates are realized within the same code block and the transversal two-qubit Clifford gates are realized across two identical code blocks.\label{lt_clifford}}; however, by the Eastin-Knill theorem, these codes cannot realize the logical $T$ via transversal gates. This yields no quantum advantage, as the Gottesman-Knill theorem~\cite{NC2010} implies that circuits composed solely of Clifford unitaries can be simulated efficiently on classical computers.

Magic state distillation circumvents the Eastin-Knill theorem by utilizing magic states to realize non-Clifford gates~\cite{CTV2017}; these protocols rely on codes that fault-tolerantly support logical non-Clifford gates. While recent works constructed asymptotically good CSS codes that support the transversal implementation of the logical transversal controlled-controlled-$Z$ gate ($CCZ$, a non-Clifford gate) across three code blocks~\cite{WHY2025,GG2025,N2025}, the  logical $T$ gate remains crucial for several quantum algorithms.  

Consequently, identifying codes that fault-tolerantly realize the logical $T$ is of significant interest. Rengaswamy et al.~\cite{RCNP2020} characterized stabilizer codes in which the transversal $T$ is a logical operator. Moreover, they demonstrated that any ${[[n,k,d]]}_2$ non-degenerate stabilizer code admitting the transversal $T$ as a logical operator implies the existence of an ${[[n, \geq k,\geq d]]}_2$ CSS code with the same property. Therefore, the analysis is restricted to such CSS codes, termed CSS-T codes. A key open problem posed in~\cite{RCNPisit2020} asks whether there exist asymptotically good CSS-T codes wherein the transversal $T$ realizes some logical non-Clifford gate. While Berardini et al.~\cite{BDMJL2025} made initial progress by constructing asymptotically good CSS-T codes wherein the transversal $T$ acts as logical identity, we advance this inquiry by demonstrating asymptotically good CSS-T codes in which the transversal $T$ implements the logical transversal $S^{\dagger}$, a non-trivial Clifford gate. Unlike~\cite{WHY2025,GG2025,N2025}, our asymptotically good codes realize only Clifford gates; to realize non-Clifford gates, we pose an open problem in Section~\ref{conclusion}.

The paper is organized as follows. Section~\ref{preliminaries_and_notation} establishes the necessary preliminaries and notation. Section~\ref{asymptoticallygoodsection} develops a framework for constructing CSS codes that support fault-tolerant logical transversal $Z$-rotations, and leverages it to obtain the asymptotically good codes discussed previously. Section~\ref{css_t_codes} then resolves an open problem from~\cite{JA2025} and revises the characterizations of CSS-T codes wherein the transversal $T$ realizes the logical identity and the logical transversal $T$, respectively, addressing gaps in~\cite{RCNP2020}. Finally, Section~\ref{conclusion} summarizes the paper and outlines open problems.

\section{Notation and Preliminaries}\label{preliminaries_and_notation}
We adapt the notation and CSS construction from~\cite{RCNP2020,HLC2022}. The single-qubit Pauli operators are defined as:
\begin{align*}
    I_2 := \begin{bmatrix}
        1 & 0 \\ 0 & 1
    \end{bmatrix},\!\!\! \quad X := \begin{bmatrix}
        0 & 1 \\ 1 & 0
    \end{bmatrix},\!\!\! \quad Z := \begin{bmatrix}
        1 & 0 \\ 0 & -1
    \end{bmatrix},\!\!\! \quad Y := \iota XZ
\end{align*}
where $\iota = \sqrt{-1}$. The operators $X, Y,$ and $Z$ are Hermitian, unitary, and involutory (i.e., $X^2=Y^2=Z^2 = I_2$).

Let $\mathbb{F}_2$ denote the binary field. For any integer $n \geq 1$ and vectors $a = (a_1, \ldots, a_n), b = (b_1, \ldots, b_n) \in \mathbb{F}_2^{n}$, the $n$-qubit Pauli operators are defined as:
\begin{equation*}
    E(a,b) \; := \; \bracket{\iota^{a_1 b_1} X^{a_1} Z^{b_1}} \otimes \ldots \otimes \bracket{\iota^{a_n b_n} X^{a_n} Z^{b_n}}
\end{equation*}
where $\otimes$ denotes the Kronecker product. The $n$-qubit Pauli group is defined as the set:
\begin{equation*}
    \mathcal{P}_n  \; := \; \big\{ \iota^{\kappa} E(a, b) : a, b \in \mathbb{F}_2^n, \kappa \in \{0,1,2,3\} \big\}
\end{equation*}
The operators $E(a,b)$ are likewise Hermitian, unitary, and involutory. 

Let $N=2^n$, and let $\mathbb{U}_N$ denote the set of all $N \times N$ unitaries. The Clifford hierarchy is defined recursively: the first level is $ \mathcal{C}^{(1)} := \mathcal{P}_n$. For $l \geq 2$, the $l$-th level is defined as:
\begin{equation*}
    \mathcal{C}^{(l)} \; := \; \left\{   U \in \mathbb{U}_N : U P U^{\dagger} \in \mathcal{C}^{(l-1)} \; \forall \; P \in \mathcal{P}_n  \right\}
\end{equation*}
The second level $\mathcal{C}^{(2)}$, constitutes the Clifford group. It is generated by the Hadamard $(H)$, Phase $(S)$, and Controlled-NOT $(CNOT)$ gates, given by:
\begin{align*}
   H  := {\frac{1}{\sqrt{2}}} \begin{bmatrix}
      1 & 1 \\ 1 & -1  
    \end{bmatrix}, \! \quad S := \begin{bmatrix}
        1 & 0 \\ 0 & \iota
    \end{bmatrix}, \! \quad CNOT := \begin{bmatrix}
        I_2 & 0 \\ 0 & X
    \end{bmatrix}
\end{align*}
The controlled-$Z$ $(CZ)$ gate is also a Clifford gate given by\footnote{The set $\{H,S, CZ\}$ also generates the Clifford group.}:
\begin{equation*}
    CZ \; := \; \bracket{I_2 \otimes H}  \cdot CNOT \cdot \bracket{I_2 \otimes H} =  \begin{bmatrix}
        I_2 & 0 \\ 0 & Z
    \end{bmatrix}
\end{equation*}

As is well-known, the Clifford group forms a universal gate set for quantum computation when supplemented with any unitary from level $l \geq 3$ of the Clifford hierarchy. To this end, we study the family of $Z$-rotations, defined for integers $m \geq 0$: 
\begin{equation*}
    R_Z \bracket{\frac{\pi}{2^m}} \; := \; \begin{bmatrix}
        1 & 0 \\ 0 & e^{\iota \pi / 2^m}
    \end{bmatrix} \in \mathcal{C}^{(m+1)}
\end{equation*}
For $m \geq 2$, these rotations are non-Clifford. Section~\ref{css_t_codes} specifically focuses on the $T$ gate, defined as $T = R_Z(\frac{\pi}{4}) \in \mathcal{C}^{(3)}$.

We now establish standard notation for binary linear codes. For a code $C$, we denote its dimension by $\dim(C)$ and its minimum distance by $d_{\min}(C)$. Furthermore, we classify $C$ as a $2^n$-divisible code if the Hamming weight of every codeword $x \in C$ satisfies $w_H(x) = 0 \pmod{2^n}$. 

The Schur product of any two vectors $x = (x_1, \ldots, x_n), y=(y_1, \ldots, y_n) \in \mathbb{F}_2^n$ is defined as:
\begin{equation*}
    x * y \; := \; (x_1 y_1, \ldots, x_n y_n)
\end{equation*}
The Schur product of two codes $C_1$ and $C_2$ is defined as:
\begin{equation*}
    C_1 * C_2 \; := \; \left< x * y : x \in C_1, y \in C_2 \right> 
\end{equation*}
where $\left< J \right>$ denotes the $\mathbb{F}_2$-linear span of vectors in $J$. 

Let $[n] := \{ 1, \ldots, n\}$. Consider the linear combination $ \oplus_{i=1}^n a_i x_i$, where $a_i \in \mathbb{F}_2$ and $x_i \in \mathbb{F}_2^n$. By the principle of inclusion-exclusion, it can be established via induction that:
\begin{align}
    w_H&\Big(\oplus_{i=1}^{n} a_i x_i \Big) \; =  \nonumber \\
    &\sum_{i=1}^{n} {(-2)}^{i-1} \; \Bigg( \sum_{ \substack{\{j_1,  \ldots, j_i\} \\ \subseteq [n]}} \Big( \prod_{m=1}^{i} a_{j_m} \Big) \; w_H(x_{j_1} \ast  \ldots \ast x_{j_i}) \Bigg) \label{eqn_1}
\end{align}
Note that we use $\oplus$ for binary addition and $+$ for integer addition. If $C$ is a $2^n$-divisible code then~\eqref{eqn_1} implies that for any $i \in [n]$ and codewords $x_1, \ldots, x_i \in C$:
\begin{equation}\label{eqn_12}
    w_H(x_1 \ast  \ldots \ast x_i) \; = \; 0 \! \pmod{2^{n-i+1}}
\end{equation}
Next, we outline the CSS construction and its encoding. Consider a pair of binary linear codes $C_2 \subseteq C_1$, where $C_1$ is an ${[n, k_1]}_2$ code, and $C_2$ an ${[n,k_2]}_2$ code. The $X$-and $Z$-stabilizers of the CSS code are generated by $C_2$ and $C_1^\perp$, respectively.

Stabilizer signs play a crucial role in determining whether a unitary acts as logical operator. Given a CSS pair, $C_2 \subseteq C_1$, we parameterize the signs of $X$- and $Z$-stabilizers using signature vectors $s_X \in \mathbb{F}_2^n / C_2^{\perp} $ and $ s_Z \in \mathbb{F}_2^n / C_1$, respectively. The stabilizer group 
$\cS$ is generated by the $X$-stabilizer generators ${(-1)}^{w_H(s_X \ast x)} E(x, 0)$, $x \in C_2$, and the $Z$-stabilizer generators ${(-1)}^{w_H(s_Z \ast z)} E(0, z)$, $z \in C_1^{\perp}$.
For a fixed pair $C_2 \subseteq C_1$, distinct signatures yield distinct CSS codes. The stabilizer group $\cS$ defines an ${[[n, k:= k_1 - k_2]]}_2$ code, denoted by $\css{\bracket{C_1,C_2, s_X, s_Z}}$ or $\mathcal{Q}_{\cS}$. Its minimum distance is given by:
\begin{align}\label{eqn_18}
    d_{\min}(\mathcal{Q}_{\cS}) \; := \;  
    \min \Big\{ \;  &d_X \; := \; \underset{x  \in C_1 \setminus C_2}{\min} w_H(x), \nonumber \\
     &d_Z \; := \; \underset{z \in C_2^{\perp} \setminus C_1^{\perp}}{\min} w_H(z) \; \Big\}
\end{align}
Note that $C_1 \setminus C_2$ denotes the set difference and $C_1 / C_2$ denotes the coset space\footnote{Throughout this paper, the choice of coset representatives is not unique; any such choice should be regarded as arbitrary but fixed.}. The code projector, $P_{\cS}$, is defined as: 
\begin{align*}
    P_{\cS} \; := \; \frac{1}{{|C_2|}} \; &\Bigg(\sum_{x \in C_2} {(-1)}^{w_H(\signx \ast  x)} E(x, 0) \Bigg) \nonumber \\
      \frac{1}{|C_1^{\perp}|} \; &\Bigg(\sum_{z \in C_1^{\perp}} {(-1)}^{w_H(\signz \ast z)} E(0,z) \Bigg) 
\end{align*}
The image of $P_{\cS}$ is precisely the code space, $\mathcal{Q}_{\cS}$. Generally, a code state in $\mathcal{Q}_{\cS}$ is obtained by projecting a vector $\ket{\psi} \in {\bracket{\mathbb{C}^2}}^{\otimes n}$. However, since $Z$-stabilizers are signed, $\ket{\psi}$ shall be chosen carefully in order to obtain a non-zero code state; a suitable choice is $\ket{s_Z}$. The logical basis state $\ket{0^k}$ is encoded as:
\begin{align}
    \ket{0^k}_L \; &:= \; \sqrt{|C_2|} \;  P_{\cS} \ket{s_Z} \nonumber \\
   &\; = \; \frac{1}{\sqrt{|C_2|}} \; \Bigg( \sum_{x \in C_2} {(-1)}^{w_H(s_X \ast x)} \ket{x \oplus s_Z} \Bigg)  \label{eqn_20} 
\end{align}

Let $\{ y_1, \ldots, y_k \}$ be a basis of the coset space $C_1 / C_2$. For each $i \in [k]$, the logical-$X$ operator $\overline{X}_i$, which acts as the Pauli $X$ gate on the $i$-th logical qubit, is given by: $\overline{X}_i \; := \; E(y_i,0)$. For any $a = (a_1, \ldots, a_k) \in \mathbb{F}_2^k$, let $y_a := \oplus_{i=1}^k a_i y_i \in C_1 / C_2$. The logical basis state $\ket{a}$ is encoded as:
\begin{align}
    \ket{a}_L \; &:= \; \overline{X}_1^{a_1}  \ldots  \; \overline{X}_k^{a_k}  \ket{0^k}_L \; = \; E(y_a, 0) \ket{0^k}_L  \nonumber \\
    &\; = \; \frac{1}{\sqrt{|C_2 |}} \; \Bigg( \sum_{x \in C_2} {(-1)}^{w_H(s_X \ast x)} \ket{y_a \oplus x \oplus s_Z} \Bigg) \label{eqn_24} 
\end{align}

\section{Asymptotically good CSS codes supporting fault-tolerant logical Clifford unitaries}\label{asymptoticallygoodsection}
This section details the framework utilizing classical divisible codes to construct CSS codes that realize logical transversal $Z$-rotations via physical transversal $Z$-rotations. We utilize this framework to demonstrate the existence of the desired asymptotically good codes. 

In this section, we restrict our attention to CSS codes with positively signed stabilizers (i.e., $s_X=s_Z=0$); consequently, $s_X$ and $s_Z$ will be omitted from the CSS notation.

\subsection{CSS Construction using classical divisible codes}\label{css_construction}
For an integer $m \geq 2$, let $C$ be an ${[n,k,d]}_2$ $2^m$-divisible code with dual distance $d^{\perp}$. Let $t = \left\lfloor \frac12 \min \{ k, d, d^{\perp}\} \right \rfloor$.  The construction consists of two steps. 
\begin{itemize}
    \item \textit{Step~$1$}: we employ the technique introduced by Krishna and Tillich in~\cite{KT2019}: puncture $t$ coordinates of $C$ to obtain an  ${[[n - t, t, \geq t ]]}_2$ CSS code, $\css{(C_1, C_2)}$.\footnote{Any choice of $t < \min\{k,d, d^{\perp}\}$ suffices for the construction. The specific choice $t = \left\lfloor \frac{\min \{ k, d, d^{\perp}\}}{2} \right \rfloor$ is for the sake of clarity in exposition.}

    \item \textit{Step~$2$}: we generalize the doubling ($2$-fold repetition) technique introduced by Betsumiya and Munemasa in~\cite{BM2012}: for $p \geq 0$, the $2^p$-fold repetition of $C_1$ and $C_2$ yields a ${[[ 2^p (n - t) , t, \geq t ]]}_2$ CSS code, $\css{{(C_1^{(p)}, {C}_2^{(p)})}}$.
\end{itemize}

Jain and Albert~\cite{JA2025} constructed weakly-triply even codes by puncturing a coordinate of self-dual doubly-even ($4$-divisible) code, followed by doubling. While we generalize this method, the motivation differs; weakly-triply even codes are used to construct CSS codes that encode one logical qubit and realize logical-$T$ via a transversal gate composed of $T$ and $T^{\dag}$ gates.

\textit{Step $1$}: Let $G_{C}$ be a generator matrix of $C$; without loss of generality, assume $G_C$ is in systematic form: 
\begin{equation}\label{eqn_49}
    G_C = \left[\begin{array}{c|c} I_k & A \end{array}\right] \ \ \text{where } A \in \mathbb{F}_2^{k \times (n-k)}
\end{equation}
The CSS construction is specified via the codes $C_1$ and $C_2$ described below, and taking $s_X = s_Z = 0$. Puncturing the first $t$ coordinates of $C$ yields the code ${C}_1$ whose generator matrix $G_{{C}_1}$ is given by:
\renewcommand{\arraystretch}{1.5}
\begin{equation*}
 G_{C_1} = \left[\begin{array}{c|c} 0_{t \times (k - t)} & A_1 \\  \hline  I_{k - t} & A_2 \end{array}\right] \ \ \text{where } A = \left[\begin{array}{c} A_1 \\ \hline  {A}_2 \end{array}\right] 
\end{equation*}
The generator matrix of the coset space $C_1/C_2$ ($G_{C_1/C_2}$) constitutes the first $t$ rows of $G_{C_1}$, while the generator matrix of the code $C_2$ ($G_{C_2}$) constitutes the remaining $k-t$ rows:
\begin{align}\label{eqn_51}
   G_{C_1} =  \begin{array}{c@{}l@{}l} 
                \left[
                \begin{array}{c|c}
                    0_{t \times \bracket{k-t}} & A_1 \\ \hline        I_{k-t} & {A}_2
                \end{array}
                \right]
                &
                \begin{array}{l}
                    \left.
                    \vphantom{\begin{array}{c}  1 \end{array}}
                    \right\} \text{$\; = G_{C_1 / C_2}$}
                    \\
                    \left.
                    \vphantom{\begin{array}{c}  2 \end{array}} 
                    \right\} \text{$\; =G_{C_2}$} 
                \end{array}
            \end{array} 
\end{align}
The choice of generator matrices ensures $C_2 \subseteq C_1$, thereby yielding the CSS code, $\css{(C_1, C_2)}$. The following lemma describes the parameters of $C_1$ and $C_2$.
\begin{lemma}\label{lemma_7} The following properties hold:
\begin{enumerate}
    \item \label{item:first}  $\dim(C_1) = \dim (C) = k$ and $\dim(C_2) = k - t$.
    \item \label{item:second}  $d_{\min}(C_1) \geq t$ and $d_{\min}(C_2^{\perp}) \geq t$.
    \item If $C$ is self-dual then $C_2 = C_1^{\perp}$.
\end{enumerate}
\end{lemma}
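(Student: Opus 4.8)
The plan is to push everything through the systematic generator matrix \eqref{eqn_49} and to reduce the distance claims to two elementary facts about puncturing a binary linear code on a coordinate set $S$ with $|S| = t$: (i) if $t$ is strictly smaller than the minimum distance, then puncturing is injective, so the dimension is preserved; and (ii) puncturing lowers the minimum distance by at most $t$. I would first record that the choice $t = \lfloor \tfrac12 \min\{k,d,d^{\perp}\} \rfloor$ gives $2t \le \min\{k,d,d^{\perp}\}$, hence $d \ge 2t$ and $d^{\perp} \ge 2t$; and that for $t \ge 1$ this yields $d, d^{\perp} \ge 2t \ge t+1 > t$ (the case $t = 0$ being trivial throughout).

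For part (1), $\dim(C_2) = k-t$ is immediate, since $G_{C_2} = [\,I_{k-t} \mid A_2\,]$ in \eqref{eqn_51} has an identity block and hence full row rank. For $\dim(C_1) = \dim(C) = k$, I would note that $C_1$ is $C$ punctured on its first $t$ coordinates and invoke fact (i), which applies because $t < d$.

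For part (2), fact (ii) applied to $C$ gives $d_{\min}(C_1) \ge d - t \ge 2t - t = t$. The crux of the part is to recognize $C_2^{\perp}$ as a punctured dual code: from $G_{C_2} = [\,I_{k-t}\mid A_2\,]$ one reads off a parity-check matrix $H_{C_2} = [\,A_2^{\top}\mid I_{n-k}\,]$ of $C_2$, and I would observe that this is exactly $H_C = [\,A^{\top}\mid I_{n-k}\,]$ --- a parity-check matrix of $C$, i.e.\ a generator matrix of $C^{\perp}$ --- with its first $t$ columns (the block $A_1^{\top}$) deleted. Thus $C_2^{\perp}$ is $C^{\perp}$ punctured on the same first $t$ coordinates, and fact (ii) applied to $C^{\perp}$ gives $d_{\min}(C_2^{\perp}) \ge d^{\perp} - t \ge t$.

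For part (3), self-duality forces $n = 2k$, and expanding $0 = G_C G_C^{\top} = I_k + A A^{\top}$ gives $A A^{\top} = I_k$. Splitting $A$ into its first $t$ rows $A_1$ and last $k-t$ rows $A_2$ and comparing blocks in $A A^{\top} = I_k$ yields $A_2 A_1^{\top} = 0$ and $A_2 A_2^{\top} = I_{k-t}$. Multiplying out $G_{C_2} G_{C_1}^{\top}$ from \eqref{eqn_51} then produces precisely the blocks $A_2 A_1^{\top}$ and $I_{k-t} + A_2 A_2^{\top}$, both of which vanish; hence $C_2 \subseteq C_1^{\perp}$. Since $\dim(C_1^{\perp}) = (n-t) - \dim(C_1) = (2k-t) - k = k-t = \dim(C_2)$ by part (1), equality follows. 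I expect the only genuinely non-mechanical point to be the identification in part (2) of $C_2^{\perp}$ with a column-punctured parity-check matrix of $C$; everything else is routine $\mathbb{F}_2$-linear algebra combined with the two puncturing facts.
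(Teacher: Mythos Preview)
Your proof is correct. For part~(2) it coincides with the paper's argument: both identify $C_2^{\perp}$ as the puncturing of $C^{\perp}$ on the first $t$ coordinates via the parity-check matrix $[A_2^{\top}\mid I_{n-k}]$, and then apply the standard puncturing bound $d_{\min}(\text{punctured}) \ge d_{\min}(\text{original}) - t$. The routes diverge in parts~(1) and~(3). For $\dim(C_1)=k$, the paper exploits the $2^m$-divisibility of $C$ to show directly that the rows of $A$ are $\mathbb{F}_2$-independent (they have odd individual weights and even pairwise overlaps), whereas you use only $t<d$ together with injectivity of puncturing. For part~(3), the paper again uses divisibility to establish $C_2\subseteq C_1^{\perp}$ for \emph{any} $2^m$-divisible $C$ (via $w_H(x\ast y)\equiv 0\pmod{2^{m-1}}$ for $x\in C_2$, $y\in C_1$), invoking self-duality only for the dimension count; you instead read $G_{C_2}G_{C_1}^{\top}=0$ straight from the block identity $AA^{\top}=I_k$ that self-duality gives. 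Your argument is more elementary and makes no use of divisibility at all; the paper's route has the side benefit of proving $C_2\subseteq C_1^{\perp}$ unconditionally, a fact it relies on elsewhere.
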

\begin{proof}
    The proof is given in Appendix~\ref{appendix_lemma_1}.
\end{proof}
\begin{remark}\label{remark_10}
    Since $C_2 \subseteq C_1$ and $C_1^{\perp} \subseteq C_2^{\perp}$, Lemma~\ref{lemma_7} implies that each non-trivial stabilizer has weight at least $t$. 
\end{remark}
The next lemma concludes the first step.
\begin{lemma}\label{lemma_9a}
    $\css{(C_1, C_2)}$ is an ${[[n - t, t, \geq t]]}_2$ code. 
\end{lemma}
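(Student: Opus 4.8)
The plan is to read off all three code parameters directly from the construction in \eqref{eqn_51} together with Lemma~\ref{lemma_7}. First, the block length: since $\css(C_1,C_2)$ is defined on the coordinates that survive puncturing $t$ of the $n$ coordinates of $C$, the codes $C_1$ and $C_2$ have length $n-t$, so the CSS code has $n-t$ physical qubits. Next, the number of logical qubits: by the CSS construction this is $k := \dim(C_1) - \dim(C_2)$, and Lemma~\ref{lemma_7}\ref{item:first} gives $\dim(C_1) = \dim(C) = k$ and $\dim(C_2) = k - t$, hence $\dim(C_1)-\dim(C_2) = t$. (One should also recall that the construction was arranged precisely so that $C_2 \subseteq C_1$, via the nesting of the generator matrices in \eqref{eqn_51}, so the CSS code is well defined; and for $t \ge 1$ the inclusion is strict, so there are genuinely $t$ logical qubits.)

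For the distance, I would invoke the formula \eqref{eqn_18}, which expresses $d_{\min}(\mathcal{Q}_{\cS})$ as $\min\{d_X, d_Z\}$ with $d_X = \min_{x \in C_1 \setminus C_2} w_H(x)$ and $d_Z = \min_{z \in C_2^{\perp}\setminus C_1^{\perp}} w_H(z)$. Since $C_1 \setminus C_2 \subseteq C_1 \setminus \{0\}$ and $C_2^{\perp}\setminus C_1^{\perp} \subseteq C_2^{\perp}\setminus\{0\}$, we get $d_X \ge d_{\min}(C_1)$ and $d_Z \ge d_{\min}(C_2^{\perp})$. Lemma~\ref{lemma_7}\ref{item:second} bounds both of these by $t$, so $d_{\min}(\mathcal{Q}_{\cS}) \ge t$. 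Combining the three facts yields an $[[n-t, t, \ge t]]_2$ code, as claimed.

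There is essentially no obstacle at this level: the real work has been pushed into Lemma~\ref{lemma_7} (proved in the appendix), whose part~\ref{item:second} — the distance bounds $d_{\min}(C_1) \ge t$ and $d_{\min}(C_2^{\perp}) \ge t$ — is the substantive ingredient and the place where the $2^m$-divisibility of $C$ and the choice $t = \lfloor \tfrac12 \min\{k,d,d^{\perp}\}\rfloor$ (or any $t < \min\{k,d,d^\perp\}$) actually get used. Given that lemma, the present statement is pure bookkeeping: track lengths, subtract dimensions, and pass from the minimum distances of $C_1$ and $C_2^{\perp}$ to the coset-restricted minima $d_X$ and $d_Z$ by the trivial set inclusions above.
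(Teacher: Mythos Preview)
Your proof is correct and follows essentially the same approach as the paper: read off the block length $n-t$ from puncturing, compute $\dim(C_1)-\dim(C_2)=t$ via Lemma~\ref{lemma_7}\ref{item:first}, and bound $d_X\ge d_{\min}(C_1)\ge t$ and $d_Z\ge d_{\min}(C_2^{\perp})\ge t$ via Lemma~\ref{lemma_7}\ref{item:second} to conclude $d_{\min}(\mathcal{Q}_{\cS})\ge t$. The paper's proof is slightly terser but the logic is identical.
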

\begin{proof}
As $C_1$ and $C_2$ have block length $n - t$, the CSS code is defined on $n-t$ physical qubits. By Lemma~\ref{lemma_7}, it encodes $\dim(C_1) - \dim(C_2) = t$ logical qubits. Moreover, we have:
\begin{align*}
    d_X \; &= \; \underset{x  \in C_1 \setminus C_2}{\min} w_H(x) \; \geq \; d_{\min}(C_1) \; \geq \; t  \\
    d_Z \; &= \;  \underset{z \in C_2^{\perp} \setminus C_1^{\perp}}{\min} w_H(z)  \geq \; d_{\min}(C_2^{\perp}) \geq \; t 
\end{align*}
so that
$d_{\min}(\mathcal{Q}_{\cS}) = \min\{d_X, d_Z\} \geq t$. 
\end{proof}

\textit{Step $2$}: For any integer $p \geq 0$, let $C_1^{(p)}$ and ${C}_2^{(p)}$ denote the $2^p$-fold repetition codes of $C_1$ and $C_2$, respectively:
\begin{align*}
    C_1^{(p)} &:= \big\{ (\underbrace{y, \ldots,  y}_{2^p \text{ times}}) : y \in C_1 \big\}  \\
    C_2^{(p)} &:= \big\{ (\underbrace{x, \ldots, x}_{2^p \text{ times}}) : x \in C_2 \big\} 
\end{align*}
The inclusion $C_2 \subseteq C_1$ implies $C_2^{(p)} \subseteq C_1^{(p)}$, yielding the CSS code, $\css{(C_1^{(p)}, C_2^{(p)})}$. By~\eqref{eqn_18}, its minimum distance is $d_{\min}(\mathcal{Q}_{\cS}^{(p)}) = \min\{ d_X^{(p)}, d_Z^{(p)} \}$, where:
\begin{align*}
     {d}_X^{(p)} \; &= \ \ \; \; \underset{{x}  \in {C}_1^{(p)} \setminus {C}_2^{(p)}}{\min} w_H( {x} ) \nonumber \\
    d_Z^{(p)} \; &=  \underset{{z} \in {({C}_2^{(p)})}^{\perp} \setminus {({C}_1^{(p)})}^{\perp}}{\min} w_H({z})  
\end{align*}
The next lemma describes the parameters of $C_1^{(p)}$ and $C_2^{(p)}$.
\begin{lemma}\label{lemma_11} The following properties hold:
    \begin{enumerate}
        \item \label{item:lemma_11_first} $\dim(C_1^{(p)}) = \dim(C_1)$ and $\dim(C_2^{(p)}) = \dim(C_2)$.
        \item \label{item:lemma_11_second} ${d}_X^{(p)} = 2^p d_X$ and ${d}_Z^{(p)} = d_Z$.
    \end{enumerate}
\end{lemma}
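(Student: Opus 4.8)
The plan is to exploit the fact that the $2^p$-fold repetition map $\phi\colon \mathbb{F}_2^{n-t} \to \mathbb{F}_2^{2^p(n-t)}$ sending $y \mapsto (y,\ldots,y)$ is an injective $\mathbb{F}_2$-linear map, with $C_i^{(p)} = \phi(C_i)$ for $i \in \{1,2\}$. For Part~\ref{item:lemma_11_first}, injectivity immediately gives $\dim(C_i^{(p)}) = \dim(\phi(C_i)) = \dim(C_i)$.

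For the claim $d_X^{(p)} = 2^p d_X$ in Part~\ref{item:lemma_11_second}, I would first observe that, since $\phi$ is injective and $C_2^{(p)} = \phi(C_2)$, a repeated vector $\phi(y)$ lies in $C_2^{(p)}$ if and only if $y \in C_2$; hence $C_1^{(p)} \setminus C_2^{(p)} = \phi(C_1 \setminus C_2)$. Combined with the elementary identity $w_H(\phi(y)) = 2^p\, w_H(y)$, this yields $d_X^{(p)} = \min_{y \in C_1 \setminus C_2} 2^p\, w_H(y) = 2^p d_X$.

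The claim $d_Z^{(p)} = d_Z$ requires identifying the relevant dual codes. A direct computation shows that $z = (z_1,\ldots,z_{2^p}) \in (C_i^{(p)})^{\perp}$ if and only if $\bigoplus_{j=1}^{2^p} z_j \in C_i^{\perp}$, since $\langle (z_1,\ldots,z_{2^p}),\, \phi(y)\rangle = \langle \bigoplus_j z_j,\, y\rangle$ for every $y \in \mathbb{F}_2^{n-t}$. Consequently $z \in (C_2^{(p)})^{\perp} \setminus (C_1^{(p)})^{\perp}$ exactly when $\bigoplus_j z_j \in C_2^{\perp}\setminus C_1^{\perp}$. For the upper bound, I place a minimum-weight element of $C_2^{\perp}\setminus C_1^{\perp}$ in the first block and zeros in the remaining blocks, obtaining a valid vector of weight $d_Z$, so $d_Z^{(p)} \le d_Z$. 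For the lower bound, I invoke subadditivity of the Hamming weight: for any such $z$, $w_H(z) = \sum_j w_H(z_j) \ge w_H\big(\bigoplus_j z_j\big) \ge d_Z$, because $\bigoplus_j z_j$ lies in $C_2^{\perp}\setminus C_1^{\perp}$. Hence $d_Z^{(p)} = d_Z$.

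Almost all of this is bookkeeping; the only points needing care are the characterization of $(C_i^{(p)})^{\perp}$ and the handling of the set differences, where one should note that $C_2 \subsetneq C_1$ (which holds since $t \ge 1$) guarantees $C_2^{\perp}\setminus C_1^{\perp} \ne \emptyset$ so that $d_Z$ is well defined. The subadditivity step in the $d_Z$ lower bound is the one place where the argument is an inequality rather than a direct equality, and it is what makes the asymmetry ${d}_X^{(p)} = 2^p d_X$ versus ${d}_Z^{(p)} = d_Z$ appear.
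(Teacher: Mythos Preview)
Your proposal is correct and follows essentially the same route as the paper's proof: repetition preserves dimension by injectivity; $d_X^{(p)}=2^p d_X$ because weights scale under $\phi$; and $d_Z^{(p)}=d_Z$ via the observation that $(z_1,\ldots,z_{2^p})\in (C_i^{(p)})^\perp$ iff $\bigoplus_j z_j\in C_i^\perp$, then an upper bound by zero-padding and a lower bound by $w_H(z)\ge w_H(\bigoplus_j z_j)$. The only difference is cosmetic: you isolate the dual characterization as a clean lemma before using it, while the paper derives the two needed implications inline during the two inequality arguments.
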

\begin{proof}
    The proof is given in Appendix~\ref{appendix_lemma_1}. 
\end{proof}
\begin{remark}\label{remark_13}
    For any $x \in C_2$, $({x, \ldots, x}) \in {C}_2^{(p)}$, and for any $z \in C_1^{\perp}$, $(z, 0, \ldots, 0) \in {({C}_1^{(p)})}^{\perp}$. Thus, by Remark~\ref{remark_10}, there exist stabilizers of weight $\geq t$. 
\end{remark}
The following lemma concludes the second step.
\begin{lemma}\label{lemma_13}
    $\css{{({C}_1^{(p)}, {C}_2^{(p)})}}$ is a ${[[2^p (n-t), t, \geq t]]}_2$ code.
\end{lemma}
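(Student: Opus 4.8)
The plan is to assemble the claim from the three defining parameters of a CSS code --- block length, number of encoded qubits, and minimum distance --- each of which is an immediate consequence of the lemmas already established in Step~1 and Step~2. The structure will mirror the short proof of Lemma~\ref{lemma_9a}.

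First, for the block length: since $C_1$ and $C_2$ both have block length $n-t$, their $2^p$-fold repetition codes $C_1^{(p)}$ and $C_2^{(p)}$ have block length $2^p(n-t)$, so $\css{(C_1^{(p)}, C_2^{(p)})}$ is defined on $2^p(n-t)$ physical qubits. Second, for the number of logical qubits, I would combine Lemma~\ref{lemma_11}\eqref{item:lemma_11_first} with Lemma~\ref{lemma_7}\eqref{item:first}: the code encodes $\dim(C_1^{(p)}) - \dim(C_2^{(p)}) = \dim(C_1) - \dim(C_2) = k - (k - t) = t$ logical qubits.

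Third, for the distance, I would invoke~\eqref{eqn_18} to write $d_{\min}(\mathcal{Q}_{\cS}^{(p)}) = \min\{ d_X^{(p)}, d_Z^{(p)} \}$, and then apply Lemma~\ref{lemma_11}\eqref{item:lemma_11_second}, which gives $d_X^{(p)} = 2^p d_X$ and $d_Z^{(p)} = d_Z$. From the proof of Lemma~\ref{lemma_9a} we already have $d_X \geq t$ and $d_Z \geq t$; since $p \geq 0$ forces $2^p \geq 1$, it follows that $d_X^{(p)} = 2^p d_X \geq d_X \geq t$ and $d_Z^{(p)} = d_Z \geq t$, whence $d_{\min}(\mathcal{Q}_{\cS}^{(p)}) \geq t$. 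Putting the three facts together yields the stated ${[[2^p(n-t), t, \geq t]]}_2$ parameters.

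Every step is a direct corollary of a previously proved lemma, so there is no substantial obstacle here; the only point deserving a moment's care is that repetition does not drop the $Z$-distance below $t$, and this is precisely the content of Lemma~\ref{lemma_11}\eqref{item:lemma_11_second} (the equality $d_Z^{(p)} = d_Z$), which itself rests on the subadditivity of Hamming support applied to the block sum of a dual codeword. I would therefore present this as a compact three-line verification in the main text rather than deferring it to the appendix.
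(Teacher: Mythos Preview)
Your proposal is correct and follows essentially the same approach as the paper: both verify the three parameters by combining Lemma~\ref{lemma_7} and Lemma~\ref{lemma_11}, with the block length coming from the $2^p$-fold repetition. The paper's proof is simply a more compressed version of exactly the argument you outline.
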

\begin{proof}
    As $C_1^{(p)}$ and $C_2^{(p)}$ have block length $2^p(n - t)$, the CSS code is defined on $2^p(n-t)$ physical qubits. By Lemma~\ref{lemma_7} and Lemma~\ref{lemma_11}, it encodes $\dim({C}_1^{(p)}) - \dim({C}_2^{(p)}) = t$ logical qubits and has minimum distance at least $t$.
\end{proof}

\subsection{$\css{(C_1^{(p)}, C_2^{(p)})}$ realizes logical transversal $Z$-rotations via physical transversal $Z$-rotations}
Hu et al.~\cite{HLC2025} proposed a framework to elevate the level of logical diagonal gates within the Clifford hierarchy. Beginning  with a CSS code wherein a physical level-$l$ diagonal gate induces a logical level-$l$ diagonal gate, their construction proceeds in three steps: $(1)$ double the code so that a physical level-$(l+1)$ diagonal gate induces the original logical level-$l$ gate; $(2)$ removing specific $Z$-stabilizers to promote the logical gate to level $(l+1)$; and $(3)$ adding specific $X$-stabilizers to compensate for the distance loss incurred in step~$(2)$.

The following lemma generalizes step~$(1)$ of~\cite{HLC2025}. While the primary objective of~\cite{HLC2025} is to increase the logical level, at the expense of code parameters, our approach differs. We generalize step~$(1)$ and omit steps~$(2)$ and~$(3)$; this yields good code parameters, although the logical level does not increase.

\begin{lemma}\label{lemma_16} In the code $\css{(C_1^{(p)}, C_2^{(p)})}$, the physical transversal $R_Z(\frac{\pi}{2^l})$ realizes the logical transversal ${\bracket{R_Z\bracket{\frac{\pi}{2^{l-p}}}}}^{\dagger}$ for $p \leq  l \leq m+p-1$, and the logical identity for $l \leq p - 1$.
\end{lemma}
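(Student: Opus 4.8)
The plan is to compute the action of the physical transversal $Z$-rotation directly on the logical basis states given by~\eqref{eqn_24}, using the diagonal structure of $R_Z(\pi/2^l)$ and the $2^m$-divisibility of $C$ (which it inherited through puncturing and $2^p$-fold repetition). First I would fix notation: since $s_X=s_Z=0$, the code state $\ket{a}_L$ for $a\in\mathbb{F}_2^t$ is $\frac{1}{\sqrt{|C_2^{(p)}|}}\sum_{x\in C_2^{(p)}}\ket{y_a\oplus x}$, where $y_a$ ranges over coset representatives of $C_1^{(p)}/C_2^{(p)}$. A transversal $R_Z(\pi/2^l)^{\otimes 2^p(n-t)}$ acts on a computational basis vector $\ket{v}$ by the scalar $e^{\iota\pi\, w_H(v)/2^l}$. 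So the transversal rotation multiplies $\ket{y_a\oplus x}$ by $e^{\iota\pi\, w_H(y_a\oplus x)/2^l}$, and the whole task reduces to understanding $w_H(y_a\oplus x)\bmod 2^{l+1}$ for $x\in C_2^{(p)}$ and then re-expressing the resulting phase as a logical operator.

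The key computational step is to expand $w_H(y_a\oplus x)$. Writing $y_a = \oplus_i a_i y_i$ with $\{y_i\}$ a fixed basis of $C_1^{(p)}/C_2^{(p)}$, I would apply the inclusion–exclusion identity~\eqref{eqn_1} to the sum $x\oplus\bigoplus_i a_i y_i$ (treating $x$ as one more summand). Since $C_1^{(p)}$ is $2^{m+p}$-divisible — each codeword of $C^{(p)}$ is a $2^p$-fold repetition of a $2^m$-divisible codeword, hence $2^{m+p}$-divisible, and $C_1^{(p)}\subseteq C^{(p)}$ after the systematic-form puncturing bookkeeping — the bound~\eqref{eqn_12} tells us every iterated Schur product of $i$ codewords of $C_1^{(p)}$ has weight $\equiv 0\pmod{2^{m+p-i+1}}$. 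Pulling out the overall factor $1/2^l$, the terms involving $x$ alone contribute $w_H(x)/2^l\equiv 0\pmod{2^{m+p-l}}$, an integer multiple of $2$ (for $l\le m+p-1$), hence contribute a trivial phase; the pure-$y$ terms assemble into $w_H(y_a)/2^l$; and the cross terms between $x$ and the $y_i$'s are where the residual dependence on $x$ lives. The crucial observation is that a cross term with $j\ge 1$ factors from $\{y_i\}$ and at least one factor of $x$ is an $(j{+}1)$-fold Schur product of $C_1^{(p)}$-codewords, divisible by $2^{m+p-j}$; dividing by $2^l$ leaves something divisible by $2^{m+p-l-j}$. For $l\le m+p-1$ the single-$y_i$–single-$x$ cross terms ($j=1$) survive modulo $2$ only as phases of the form $(-1)^{(\text{something}) }$ or $e^{\iota\pi(\cdots)/2^{l-?}}$ depending on how many powers of $2$ remain; organizing these carefully, the coefficient of $a_i$ in the exponent is $\tfrac{1}{2^l}w_H(y_i\ast x)$ plus higher-order corrections, and one finds the net effect is exactly the controlled-phase structure of $\bigl(R_Z(\pi/2^{l-p})\bigr)^{\dagger}$ acting on $\ket{a}$ — the shift $l\mapsto l-p$ coming from the extra factor $2^p$ of divisibility supplied by the repetition, and the dagger from the sign in the $(-2)^{i-1}$ alternation of~\eqref{eqn_1}. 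When $l\le p-1$ every such term already carries a factor $2^{m+p-l-j}\ge 2$ on the relevant Schur weights (using $m\ge 2$), so all phases are trivial and the transversal gate is the logical identity.

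I expect the main obstacle to be the bookkeeping in the cross terms: isolating precisely which iterated Schur products $y_{i_1}\ast\cdots\ast y_{i_j}\ast x$ retain a nontrivial phase after division by $2^l$, and verifying that their assembled contribution is genuinely a well-defined logical operator (independent of the coset representative $x$ chosen and of the basis $\{y_i\}$), rather than merely a state-dependent phase. Establishing well-definedness amounts to checking that replacing $x$ by $x\oplus x'$ with $x'\in C_2^{(p)}$ changes nothing modulo $2^{l+1}$ in the exponent — again a consequence of~\eqref{eqn_12} applied to products involving the extra $C_2^{(p)}$-codeword, together with the fact that $C_2^{(p)}$ sits inside the $2^{m+p}$-divisible code $C_1^{(p)}$. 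Once that is in hand, matching the resulting diagonal unitary to $\bigl(R_Z(\pi/2^{l-p})\bigr)^{\dagger}$ transversally across the $t$ logical qubits is a direct comparison of phase polynomials, and the two boundary regimes ($l\le p-1$ and $p\le l\le m+p-1$) fall out of the same computation by tracking the single inequality that decides whether the leading Schur-product term contributes $1$ or a genuine $2^{l-p}$-th root of unity. I would relegate the detailed induction to an appendix and present only the phase expansion and the identification here.
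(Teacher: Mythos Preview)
Your overall strategy---compute the diagonal transversal gate on the logical basis states and analyze the phase $e^{\iota\pi\, w_H(\widetilde{y}_a\oplus\widetilde{x})/2^l}$---is the same as the paper's. But the central claim that $C_1^{(p)}$ is $2^{m+p}$-divisible (justified via ``$C_1^{(p)}\subseteq C^{(p)}$ after the systematic-form puncturing bookkeeping'') is false. The code $C_1$ is obtained by \emph{puncturing} $C$, not by passing to a subcode; the block lengths differ, and padding a punctured codeword with zeros need not land back in $C$. Concretely, the basis vectors $y_i$ of $C_1/C_2$ satisfy $w_H(y_i)\equiv -1\pmod{2^m}$ (Proposition~\ref{proposition_6}), because puncturing removed precisely the systematic coordinate carrying the unique $1$ of row $i$ of $G_C$. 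If $C_1^{(p)}$ were $2^{m+p}$-divisible as you assert, then by your own use of~\eqref{eqn_12} every phase would be trivial for $l\le m+p-1$ and the transversal gate would be the logical identity throughout, contradicting the very statement you are trying to prove.

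This is not cosmetic: that $-1$ residue is exactly where the logical gate and the dagger come from. The paper expands only once, $w_H(\widetilde{x}\oplus\widetilde{y}_a)=w_H(\widetilde{x})+w_H(\widetilde{y}_a)-2w_H(\widetilde{x}\ast\widetilde{y}_a)$, and then invokes Proposition~\ref{proposition_22}: the first and third terms vanish modulo $2^{m+p}$ (using that $C_2$, not $C_1$, is $2^m$-divisible, and that $w_H(x\ast y)\equiv 0\pmod{2^{m-1}}$ for $x\in C_2$, $y\in C_1$), while the middle term gives $w_H(\widetilde{y}_a)=2^p w_H(y_a)\equiv -2^p\, w_H(a)\pmod{2^{m+p}}$. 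So the cross terms do \emph{not} carry the logical action as your write-up suggests---they vanish---and the gate comes entirely from the pure-$y$ term, with the minus sign arising from the lost systematic coordinate rather than from the $(-2)^{i-1}$ alternation in~\eqref{eqn_1}. Once the $x$-dependent terms are seen to vanish modulo $2^{l+1}$, your concern about coset-representative independence also dissolves automatically.
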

\begin{proof}
    The proof is given in Appendix~\ref{appendix_lemma_1}.
\end{proof}

\subsection{Existence of asymptotically good CSS codes that fault-tolerantly realize logical transversal Clifford unitaries}
The rate and relative distance of an ${[[n,k,d]]}_2$ quantum code (or an ${[n,k,d]}_2$ classical code) are defined as $\frac{k}{n}$ and $\frac{d}{n}$, respectively. 

A sequence of quantum codes, ${({[[n_i, k_i, d_i]]}_2)}_{i \in \mathbb{N}}$ (respectively, classical codes, ${({[n_i, k_i, d_i]}_2)}_{i \in \mathbb{N}}$), is called asymptotically good if $\lim_{i \to \infty} n_i = \infty$, and both the asymptotic rate $\liminf_{i\to\infty} \frac{k_i}{n_i}$ and the asymptotic relative distance $\liminf_{i\to\infty} \frac{d_i}{n_i}$ are strictly positive. 

\begin{lemma}\label{lemma_8}
For each integer $l \geq 1$, there  exists a family of asymptotically good CSS codes in which the physical transversal $R_Z \bracket{\frac{\pi}{2^l}}$ realizes the logical transversal $S^{\dagger}$. 
\end{lemma}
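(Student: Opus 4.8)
The plan is to combine the construction of Section~\ref{css_construction} with the level-calculus of Lemma~\ref{lemma_16}, and to supply the needed input: an infinite family of asymptotically good, highly divisible classical codes. Concretely, fix $l \geq 1$ and set $p := l - 1$; then by Lemma~\ref{lemma_16}, in the code $\css{(C_1^{(p)}, C_2^{(p)})}$ the physical transversal $R_Z(\frac{\pi}{2^l})$ realizes the logical transversal $\bracket{R_Z\bracket{\frac{\pi}{2^{l-p}}}}^{\dagger} = \bracket{R_Z\bracket{\frac{\pi}{2}}}^{\dagger} = S^{\dagger}$, provided the divisibility parameter $m$ of the starting classical code $C$ satisfies $p \leq l \leq m+p-1$, i.e.\ $m \geq 2$. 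Since the construction in Section~\ref{css_construction} only assumes $m \geq 2$, the divisibility hypothesis is met as soon as we start from any $2^m$-divisible code with $m \geq 2$ — doubly-even ($4$-divisible) codes suffice for every $l$.

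Next I would produce the asymptotically good family. I would take a sequence of doubly-even binary codes $(C^{(i)})_{i \in \mathbb{N}}$ with $C^{(i)}$ an ${[n_i, k_i, d_i]}_2$ code such that $n_i \to \infty$ and the rate $k_i/n_i$, the relative distance $d_i/n_i$, and the relative dual distance $d_i^{\perp}/n_i$ are all bounded below by a positive constant; such families exist (for instance, one can invoke known constructions of asymptotically good doubly-even self-dual codes, for which $d^{\perp} = d$, or more simply observe that taking $C \oplus C$-type or concatenation-with-$[2,1,2]$ tricks turns any asymptotically good family with good dual distance into a doubly-even one at the cost of only constant factors in rate and distance). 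Setting $t_i := \lfloor \tfrac12 \min\{k_i, d_i, d_i^{\perp}\}\rfloor = \Theta(n_i)$, Lemma~\ref{lemma_13} gives that $\css{(C_1^{(p)}, C_2^{(p)})}$ built from $C^{(i)}$ is a $[[\,2^p(n_i - t_i),\, t_i,\, \geq t_i\,]]_2$ code. Because $p = l-1$ is a constant independent of $i$, the block length is $\Theta(n_i)$, the number of logical qubits is $t_i = \Theta(n_i)$, and the distance is $\geq t_i = \Theta(n_i)$; hence the rate and relative distance are both bounded below by positive constants, and the block length tends to infinity. This is exactly the asymptotic-goodness requirement.

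Assembling the two pieces: for each $i$, the code $\mathcal{Q}_i := \css{(C_1^{(p)}, C_2^{(p)})}$ derived from $C^{(i)}$ is asymptotically good and, by Lemma~\ref{lemma_16} with $p = l-1$, supports the logical transversal $S^{\dagger}$ under the physical transversal $R_Z(\pi/2^l)$; since $S^{\dagger}$ (equivalently $S$, up to the sign that is harmless here) is realized on each logical qubit by a single-qubit gate applied within the same block, this is a genuine transversal — hence fault-tolerant — implementation. This proves the lemma.

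The main obstacle is the second paragraph: guaranteeing an infinite family of doubly-even (or more generally $2^m$-divisible with $m \geq 2$) binary codes that is simultaneously asymptotically good \emph{and} has linear dual distance, so that $t_i = \Theta(n_i)$. Pure existence of asymptotically good codes is classical, but the joint control of minimum distance, dual distance, and $2^m$-divisibility is the delicate point; I expect the cleanest route is to cite an explicit known construction (e.g.\ asymptotically good doubly-even self-dual codes, where self-duality forces $d^{\perp} = d$ and doubly-evenness is built in) rather than to re-derive it, and otherwise to reduce to such a family by an elementary divisibility-preserving transformation that costs only constant factors.
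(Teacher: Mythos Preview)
Your proposal is correct and follows essentially the same route as the paper: fix $p = l-1$, invoke Lemma~\ref{lemma_16} with $m=2$ so that the physical transversal $R_Z(\pi/2^l)$ realizes the logical transversal $S^\dagger$, and feed the construction with an asymptotically good family of doubly-even self-dual codes (the paper cites \cite{SMT1972,PW2007} for this), then read off the parameters from Lemma~\ref{lemma_13}. The paper commits directly to the self-dual doubly-even input rather than leaving it as an option, which also simplifies $t_i$ to $\tfrac12\min\{k_i,d_i\}$ since $d_i^\perp=d_i$; otherwise your argument and the paper's coincide.
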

\begin{proof}
It is known \cite{SMT1972}, \cite{PW2007} that there exist asymptotically good self-dual doubly-even codes, ${({[n_i, k_i = \frac{n_i}{2}, d_i]_2})}_{i \in \mathbb{N}}$, with asymptotic rate $\frac{1}{2}$ and relative distance $0 < \delta < \frac{1}{2}$. 

Define $t_i=\frac{\min \{ k_i, d_i \}}{2}$. For any $p \geq 0$, applying Lemma~\ref{lemma_13} yields a family of asymptotically good CSS codes, ${( {[[2^p(n_i - t_i), t_i, \geq t_i]]_2})}_{i \in \mathbb{N}}$, with asymptotic rate and relative distance at least $  \frac{\delta/2}{2^p (1 - (\delta/2))} > 0$.

Furthermore, by Lemma~\ref{lemma_16} (with $m=2$ and $l=p+1$), the physical transversal $R_Z\bracket{\frac{\pi}{2^{p+1}}}$ realizes the logical transversal ${R_Z\bracket{\frac{\pi}{2}}}^{\dagger} = S^{\dag}$ in these CSS codes.
\end{proof}
\begin{remark}
    Remark~\ref{remark_13} implies the existence of stabilizers whose weights scale linearly with the number of physical qubits. Consequently, the constructed family of asymptotically good CSS codes does not form a quantum LDPC family.
\end{remark}
A CSS code is called self-dual if its $X$-and $Z$-stabilizers are generated by the same code (i.e., $C_2 = C_1^{\perp}$). 
\begin{theorem}
    There exists a family of asymptotically good self-dual CSS codes that realize the logical transversal Clifford group via transversal physical gates.
\end{theorem}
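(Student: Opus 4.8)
The plan is to instantiate the Step~1 construction with no doubling (i.e.\ $p=0$, since the $2^p$-fold repetition destroys self-duality once $p\ge 1$) on an asymptotically good family of self-dual doubly-even codes, and then to exhibit transversal physical circuits realizing the logical transversal $H$, $S$, and $CNOT$ gates, which generate the logical transversal Clifford group. Concretely: start from the asymptotically good self-dual doubly-even codes $\big([n_i,\tfrac{n_i}{2},d_i]_2\big)_{i\in\mathbb N}$ of \cite{SMT1972, PW2007} with $d_i/n_i\to\delta\in(0,\tfrac12)$. Self-duality forces the dual distance to equal $d_i$, so $t_i=\big\lfloor\tfrac12 d_i\big\rfloor$ for all large $i$. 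Step~1 then yields $\css{(C_1^{(i)},C_2^{(i)})}$, which is an $[[n_i-t_i,\,t_i,\,\ge t_i]]_2$ code by Lemma~\ref{lemma_9a} and is self-dual, $C_2^{(i)}=(C_1^{(i)})^{\perp}$, by the third part of Lemma~\ref{lemma_7}. The rate/relative-distance estimate from the proof of Lemma~\ref{lemma_8} applies verbatim: both the rate $t_i/(n_i-t_i)$ and the relative distance tend to $\tfrac{\delta}{2-\delta}>0$, so this family is asymptotically good.

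The next step is to realize the three Clifford generators transversally, all with respect to the single canonical logical basis $\{y_1,\dots,y_{t_i}\}$ given by the rows of $G_{C_1/C_2}$ (so $\overline X_j=E(y_j,0)$). For the phase gate, Lemma~\ref{lemma_16} with $m=2$, $p=0$, $l=1$ gives that the physical transversal $S=R_Z(\tfrac{\pi}{2})$ realizes the logical transversal $S^{\dagger}$; applying it three times (still a transversal operation) realizes the logical transversal $S$. For the Hadamard, a short computation using the $4$-divisibility and self-orthogonality of $C^{(i)}$ shows that the $y_j$ are orthonormal for the non-degenerate symmetric bilinear form that $C_2=C_1^{\perp}$ induces on $C_1/C_2$; hence $E(0,y_j)$ is exactly the logical $\overline Z_j$ dual to $\overline X_j$, and the physical transversal Hadamard---which fixes the self-dual stabilizer group (here $s_X=s_Z=0$, so no signs intervene)---sends $\overline X_j\leftrightarrow\overline Z_j$ and therefore realizes the logical transversal $H$ on the same basis. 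For $CNOT$, take two copies of the code: the physical transversal $CNOT$ fixes the joint stabilizer group (an $X$-stabilizer of the control block maps to the product of the two matching $X$-stabilizers, a $Z$-stabilizer of the target block to the product of the two matching $Z$-stabilizers, the rest are fixed) and on logicals sends $\overline X_j^{(1)}\mapsto \overline X_j^{(1)}\overline X_j^{(2)}$, $\overline Z_j^{(2)}\mapsto \overline Z_j^{(1)}\overline Z_j^{(2)}$ with all else fixed, i.e.\ realizes the logical transversal $CNOT$ across the two blocks. Since $\{H,S,CNOT\}$ generates the Clifford group, composing these transversal realizations as a constant-depth circuit of transversal physical gates (for example $CZ=(I\otimes H)\,CNOT\,(I\otimes H)$) realizes the entire logical transversal Clifford group fault-tolerantly.

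The main obstacle is compatibility: the three logical gates must act with respect to one common logical-qubit decomposition, since ``logical transversal Clifford group'' is defined relative to such a fixed decomposition. This is why the orthonormality computation in the Hadamard step is the crux---once the canonical puncturing basis is shown to be orthonormal for a self-dual doubly-even $C$ (which follows from $w_H(x)\equiv 0\pmod 4$ for $x\in C$ together with $C=C^{\perp}$), the transversal Hadamard acts cleanly on exactly the basis that Lemma~\ref{lemma_16} already uses for the $Z$-rotations, and $S$, $H$, $CNOT$ all line up. A minor secondary point is ruling out stray phases or logical Pauli corrections in the three realizations; this follows from $s_X=s_Z=0$ and the fact that every Pauli representative appearing above is a product of only $X$'s or only $Z$'s.
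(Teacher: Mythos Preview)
Your proposal is correct and follows the same route as the paper: puncture an asymptotically good self-dual doubly-even code with $p=0$, use Lemma~\ref{lemma_7}(3) for self-duality of the resulting CSS code, Lemma~\ref{lemma_16} for the phase gate, and then realize the remaining Clifford generators transversally by exploiting the orthonormality $w_H(y_i\ast y_j)\equiv\delta_{ij}\pmod 2$ of the canonical coset-representative basis. The only differences are cosmetic substitutions: the paper works with the generating set $\{H,S,CZ\}$ (citing~\cite{TTF2025} for transversal $H$ and proving transversal $CZ$ in Appendix~\ref{transversal_cz} via exactly the orthonormality computation you use for $H$), and recovers logical $S$ as $S^\dagger Z$ rather than $(S^\dagger)^3$, whereas you argue $H$ directly and use the standard transversal $CNOT$ on CSS codes---both packagings are equivalent.
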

\begin{proof}
    Lemma~\ref{lemma_8} (with $l=1$ and $p=0$) establishes the existence of a family of asymptotically good CSS codes, ${(\css(C_{1,i}, C_{2,i}))}_{i \in \mathbb{N}}$,  wherein the physical transversal $S$ realizes the logical transversal $S^{\dagger}$. The logical transversal $S$ is obtained via the relation $S = S^\dagger Z$, utilizing the fact that the physical transversal $Z$ realizes the logical transversal $Z$ (Lemma~\ref{lemma_16} with $l=0$ and $p=0$).
    
    By Lemma~\ref{lemma_7}, every code in this family is self-dual (i.e., $C_{2,i} = C_{1,i}^{\perp}$). Consequently, by~\cite{TTF2025}, the physical transversal $H$ realizes the logical transversal $H$ in these CSS codes. 
    
    Further, Appendix~\ref{transversal_cz} demonstrates that the physical transversal $CZ$ across two such code blocks realizes the logical transversal $CZ$. 
    
    Since the set $\{ S, H,  CZ\}$ generates the Clifford group, these CSS codes realize the logical transversal Clifford group.
\end{proof} 

Recall that CSS codes wherein the physical transversal $T$ is a logical operator are called CSS-T codes.
\begin{theorem}\label{theorem_11}
     There exist asymptotically good CSS-T codes wherein the transversal $T$ realizes the logical transversal $S^{\dagger}$.
\end{theorem}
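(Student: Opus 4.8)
The plan is to specialize the asymptotically good family from Lemma~\ref{lemma_8} to the case that makes the transversal gate the $T$ gate, and then verify that the resulting codes actually satisfy the defining property of CSS-T codes, namely that transversal $T$ is a logical operator. Concretely, I would invoke Lemma~\ref{lemma_8} with $l = 2$: this produces a family of asymptotically good CSS codes $(\css(C_{1,i}^{(p)}, C_{2,i}^{(p)}))_{i\in\mathbb N}$ in which the physical transversal $R_Z(\frac{\pi}{2^2}) = R_Z(\frac{\pi}{4}) = T$ realizes the logical transversal $S^{\dagger} = R_Z(\frac{\pi}{2})^{\dagger}$. Reading the proof of Lemma~\ref{lemma_8}, taking $l=2$ forces $p=1$ (since there $l = p+1$ and $m=2$), so the underlying object is the $2$-fold repetition of the punctured self-dual doubly-even code; the block lengths are $2(n_i - t_i)$, the dimension is $t_i$, the distance is at least $t_i$, and the asymptotic rate and relative distance are at least $\frac{\delta}{2(1-\delta)} > 0$. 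This gives asymptotic goodness.

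The second thing to check is that these codes are genuinely CSS-T codes, i.e.\ that transversal $T$ preserves the code space (acts as \emph{a} logical operator), not merely that it induces a specific logical action on states in the code space. But this is immediate from Lemma~\ref{lemma_16}: the statement ``the physical transversal $R_Z(\frac{\pi}{2^l})$ realizes the logical transversal $(R_Z(\frac{\pi}{2^{l-p}}))^{\dagger}$'' already asserts that the transversal rotation is a valid logical operator (it maps the code space to itself and implements the named logical gate). Specializing to $l=2$, $p=1$ yields that transversal $T$ is a logical operator implementing logical transversal $S^{\dagger}$, which is exactly the CSS-T property together with the claimed logical action. Hence no additional verification of the CSS-T condition (such as checking $C_2 \ast C_1 \subseteq C_1^{\perp}$ or divisibility-type conditions) is needed beyond what Lemma~\ref{lemma_16} already provides.

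I would then assemble the statement: by Lemma~\ref{lemma_8} with $l=2$ (equivalently Lemma~\ref{lemma_13} and Lemma~\ref{lemma_16} applied to an asymptotically good self-dual doubly-even code with $m=2$, $p=1$), the family $(\css(C_{1,i}^{(1)}, C_{2,i}^{(1)}))_{i\in\mathbb N}$ is asymptotically good, transversal $T$ is a logical operator, and it realizes logical transversal $S^{\dagger}$; therefore these are asymptotically good CSS-T codes with the asserted property. I would close by remarking (mirroring the remark after Lemma~\ref{lemma_8}) that, via Remark~\ref{remark_13}, these codes have stabilizers of weight linear in the block length, so the family is not LDPC.

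The main obstacle, such as it is, is essentially bookkeeping rather than mathematics: one must make sure the parameter substitution is consistent — that $l=2$ is compatible with the constraint $p \le l \le m+p-1$ in Lemma~\ref{lemma_16} (with $m=2$ this forces $p \le 2 \le p+1$, i.e.\ $p=1$), and that the self-dual doubly-even hypothesis ($m=2$) is what licenses using the codes from \cite{SMT1972,PW2007}. A secondary point worth stating explicitly is \emph{why} transversal $T$ being a logical operator is the same as the CSS-T property — i.e.\ recalling that CSS-T codes are precisely CSS codes admitting transversal $T$ as a logical operator — so that the reader sees the theorem is a direct corollary of Lemma~\ref{lemma_8}.
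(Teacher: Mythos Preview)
Your proposal is correct and matches the paper's approach exactly: the paper's proof is the single line ``The proof follows by Lemma~\ref{lemma_8} with $l=2$, $p=1$,'' and your write-up simply expands this with the bookkeeping (checking $p\le l\le m+p-1$, recalling that CSS-T means transversal $T$ is a logical operator) that the paper leaves implicit. The LDPC remark you add is not part of the paper's proof of this theorem but is consistent with the surrounding discussion.
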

\begin{proof}
    The proof follows by Lemma~\ref{lemma_8} with $l=2$, $p=1$.
\end{proof}

\section{CSS-T codes}\label{css_t_codes}
In this section, stabilizer signs play a pivotal role; thus, we reinstate our CSS notation with $X$-and $Z$-signatures. The following characterization of CSS-T codes is obtained by specializing Theorem~12 of~\cite{HLC2022} to the $T$ gate.
\begin{theorem}[Hu et al.~\cite{HLC2022}]\label{theorem_3}
    A CSS code ${\css(C_1, C_2, s_X, s_Z)}$ is a CSS-T  code iff  for all $x \in C_2$, $y \in C_1$,
    \begin{equation}
        w_H(x) - 2 w_H\big(x \ast \bracket{y \oplus s_Z} \big) = 0 \!\!\pmod{8}\label{eqn_25}
    \end{equation}
\end{theorem}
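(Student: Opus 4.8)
The plan is to characterize when the physical transversal $T = \bigotimes_{i=1}^n R_Z(\pi/4)$ maps the code space $\mathcal{Q}_{\mathcal{S}}$ to itself, and then translate this into the stated congruence. First I would compute how $T$ acts on the encoded basis states. Using the encoding~\eqref{eqn_24}, $\ket{a}_L$ is a uniform (up to signs $(-1)^{w_H(s_X \ast x)}$) superposition over computational basis vectors of the form $\ket{y_a \oplus x \oplus s_Z}$ with $x \in C_2$. Since $T$ is diagonal, $T\ket{v} = e^{\iota \pi w_H(v)/4}\ket{v}$ for $v \in \mathbb{F}_2^n$. So $T\ket{a}_L$ picks up, on the $x$-th term, the phase $e^{\iota \pi w_H(y_a \oplus x \oplus s_Z)/4}$. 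The key is to expand this weight: writing $u := y_a \oplus s_Z$ (a fixed vector depending only on the logical label $a$ and the $Z$-signature) and using the inclusion–exclusion identity $w_H(x \oplus u) = w_H(x) + w_H(u) - 2 w_H(x \ast u)$, the phase factors as $e^{\iota\pi w_H(u)/4}$ (a global phase on $\ket{a}_L$, harmless) times $e^{\iota\pi(w_H(x) - 2w_H(x\ast u))/4}$, which is the term-dependent part.

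The next step is to recognize that for $T\ket{a}_L$ to lie back in $\mathcal{Q}_{\mathcal{S}}$, the term-dependent phase $e^{\iota\pi(w_H(x)-2w_H(x\ast u))/4}$ must be constant over $x \in C_2$ — more precisely it must equal a sign $(-1)^{(\text{something linear in }x)}$ so that the resulting state is again of the encoded form, possibly for a different logical state and a modified $X$-signature. One shows (this is the substance of Theorem~12 of~\cite{HLC2022}, here I would just re-derive the $T$-specialization) that the only way the superposition $\sum_{x\in C_2}(-1)^{w_H(s_X\ast x)} e^{\iota\pi(w_H(x)-2w_H(x\ast u))/4}\ket{y_a\oplus x\oplus s_Z}$ can be a codeword of $\mathcal{Q}_{\mathcal{S}}$ is for the exponent $w_H(x) - 2w_H(x\ast u)$ to be $\equiv 0 \pmod 8$ for every $x \in C_2$ — intuitively, any nonzero residue mod $8$ that is not realizable as $(-1)^{\ell(x)}$ for an $\mathbb{F}_2$-linear $\ell$ would produce a state outside the span; and a residue that did vary as $\pm 1$ would correspond to a nontrivial change of $X$-signature, but then $T$ would not be a logical operator on the nose (it would change the code). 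Being careful here: $u = y_a \oplus s_Z$ ranges, as $a$ varies over $\mathbb{F}_2^k$, exactly over the cosets $y + s_Z$ with $y \in C_1/C_2$; and since $x \ast (y \oplus s_Z \oplus x')$ for $x' \in C_2$ differs from $x \ast (y\oplus s_Z)$ by $x \ast x'$, which by divisibility-type identities contributes multiples of the right power of two, one reduces the condition ``for all $a$'' to ``for all $y \in C_1$, all $x \in C_2$'', giving exactly~\eqref{eqn_25}.

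Concretely the steps are: (1) apply $T$ to the generic encoded state and extract the per-term phase $e^{\iota\pi(w_H(x)-2w_H(x\ast(y_a\oplus s_Z)))/4}$; (2) argue that $T$ preserving the code space forces this phase to be trivial (i.e. the exponent $\equiv 0\pmod 8$) for all $x\in C_2$ and all logical labels $a$; (3) check the converse — if~\eqref{eqn_25} holds then $T$ acts as a global phase $e^{\iota\pi w_H(y_a\oplus s_Z)/4}$ on each $\ket{a}_L$, hence is diagonal in the logical basis and in particular a logical operator — and (4) reduce the quantifier over all $a$ (equivalently all $y_a \in C_1/C_2$) to the quantifier over all $y \in C_1$ stated in~\eqref{eqn_25}, using $C_1 = \bigcup_{a}(y_a \oplus C_2)$ together with the absorption $x\ast(y\oplus x') = (x\ast y)\oplus(x\ast x')$ and the mod-$8$ bookkeeping from the inclusion–exclusion identity~\eqref{eqn_1}.

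The main obstacle is step~(2): rigorously arguing that \emph{no} nonzero residue mod $8$ (in particular neither $4$, which would give a consistent sign flip, nor $\pm 2$, $\pm 1$) can appear. The residue-$4$ case is the delicate one, since a uniform sign pattern $(-1)^{(\text{linear})}$ does keep the state in some CSS code — but one must show it cannot keep it in \emph{this} code with \emph{this} signature (equivalently, that it would correspond to a nontrivial logical Pauli-$Z$-type byproduct that, combined with the requirement that $T$ be the same operator for all logical inputs, forces the residue to vanish). I expect to handle this by invoking the precise statement of Theorem~12 of~\cite{HLC2022} rather than reproving it from scratch, and simply verifying that its hypotheses specialize to~\eqref{eqn_25} when the diagonal gate is $T = R_Z(\pi/4)$, whose defining phase angle is $\pi/4$ and hence whose ``order'' parameter in that theorem is $8$.
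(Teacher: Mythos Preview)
Your sufficiency direction matches the paper exactly. The necessity direction is where you diverge: you flag the residue-$4$ case as ``the main obstacle'' and propose to resolve it by invoking Theorem~12 of \cite{HLC2022} --- but that defeats the purpose, since the paper explicitly advertises an \emph{independent} proof. The paper's necessity argument is self-contained and dissolves your steps~(2) and~(4) simultaneously. The key idea you are missing: for \emph{any} $y \in C_1$ (not only the chosen coset representatives $y_a$), the operator $E(y,0)$ lies in the normalizer of $\cS$ and hence commutes with $P_{\cS}$; once $T^{\otimes n}$ is assumed logical it too commutes with $P_{\cS}$. Therefore
\[
T^{\otimes n} E(y,0)\ket{0^k}_L \;=\; P_{\cS}\, T^{\otimes n} E(y,0)\ket{s_Z} \;=\; e^{\iota\pi w_H(y\oplus s_Z)/4}\, E(y,0)\ket{0^k}_L,
\]
which pins down the scalar exactly --- no case analysis on residues is needed. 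Expanding the left-hand side directly as a superposition over $x\in C_2$ and equating coefficients on the linearly independent set $\{\ket{y\oplus x\oplus s_Z}:x\in C_2\}$ forces $w_H(x)-2w_H(x\ast(y\oplus s_Z))\equiv 0\pmod 8$ immediately, with the quantifier already running over all $y\in C_1$. Your step~(4) reduction (which the paper in fact also records, as the separate Corollary~\ref{corollary_4}) is thus unnecessary for the theorem itself.
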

\begin{proof}
An independent proof is given in Appendix~\ref{appendix_2}.
\end{proof}
\begin{remark}
    Theorem~\ref{theorem_3} implies that the CSS-T characterization is independent of the $X$-stabilizer signs; consequently, we omit $s_X$ from the notation henceforth.
\end{remark}

The following lemma provides an equivalent characterization of CSS-T codes.
\begin{lemma}\label{lemma_21}
    Consider $C_2 \subseteq C_1$. Let $\{x_1, \ldots, x_{k_2}\}$ be a basis of $C_2$, and let $\{ y_1, \ldots, y_k\}$ be a basis of $C_1 / C_2$. A CSS code ${\css(C_1, C_2, s_Z)}$ is a CSS-T code iff for any distinct $i,j,p \in [k_2] $, distinct $q,r \in [k]$, the following hold:
        \begin{align}
            w_H(x_i) - 2 w_H(x_i \ast s_Z) &= 0 \!\pmod{8}\label{eqn_28f} \\
            w_H(x_i \ast x_j) - 2 w_H(x_i \ast x_j \ast s_Z) &= 0 \!\pmod{4} \label{eqn_29f} \\
            w_H(x_i \ast x_j \ast x_p)  &= 0 \!\pmod{2}  \label{eqn_31f}\\
             w_H(x_i \ast y_q) - 2 w_H(x_i \ast y_q \ast s_Z) &= 0 \!\pmod{4} \label{eqn_30f} \\
            w_H(x_i \ast x_j \ast y_q) &= 0 \!\pmod{2}\label{eqn_33f} \\
             w_H(x_i \ast y_q \ast y_r) &= 0 \!\pmod{2} \label{eqn_32f}
        \end{align}
\end{lemma}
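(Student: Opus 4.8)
The plan is to start from the characterization in Theorem~\ref{theorem_3}, which says $\css(C_1,C_2,s_Z)$ is CSS-T iff $w_H(x) - 2w_H\big(x\ast(y\oplus s_Z)\big)\equiv 0\pmod 8$ for all $x\in C_2$, $y\in C_1$. Every $x\in C_2$ can be written $x=\oplus_{i} a_i x_i$ and every $y\in C_1$ as $y = \oplus_i b_i x_i \oplus \oplus_q c_q y_q$, with $a_i,b_i,c_q\in\mathbb F_2$. The strategy is to expand both $w_H(x)$ and $w_H\big(x\ast(y\oplus s_Z)\big)$ using the inclusion--exclusion identity~\eqref{eqn_1}, collect terms modulo~$8$, and show that the vanishing of the whole expression for \emph{all} choices of the coefficient vectors is equivalent to the vanishing of the six ``primitive'' terms~\eqref{eqn_28f}--\eqref{eqn_32f} indexed by small subsets of basis elements. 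The key point is that the coefficient of each monomial $a_Ib_Jc_Q$ (where $I,J\subseteq[k_2]$, $Q\subseteq[k]$) in the expansion is some fixed integer depending only on the corresponding Schur products of basis vectors, and since the $a_i,b_i,c_q$ range freely over $\mathbb F_2$, the congruence holds for all inputs iff each such coefficient vanishes mod~$8$.

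The key steps, in order, are: (i) substitute the basis expansions into~\eqref{eqn_25} and apply~\eqref{eqn_1} to $w_H(x)$, obtaining $\sum_{i\ge 1}(-2)^{i-1}\sum_{|I|=i}\big(\prod_{\ell\in I}a_\ell\big)w_H(\ast_{\ell\in I}x_\ell)$; (ii) do the same for $w_H\big(x\ast(y\oplus s_Z)\big)$, noting $x\ast(y\oplus s_Z) = \oplus_i a_i\big(x_i\ast(y\oplus s_Z)\big)$ is itself a linear combination, so~\eqref{eqn_1} applies with the $a_i$ as coefficients; (iii) because of the overall factor $2$ in front of the second term and the factor $(-2)^{i-1}$ inside~\eqref{eqn_1}, only low-order Schur products survive modulo~$8$ — single products contribute mod~$8$, double products mod~$4$ (they carry a factor~$2$), triple products mod~$2$ (factor~$4$), and all products of four or more basis vectors are automatically $0$ mod~$8$; this is where the bookkeeping is delicate; (iv) within each surviving order, use the known divisibility facts from Lemma~\ref{lemma_7} and~\eqref{eqn_12} (the self-dual doubly-even origin gives enough divisibility that, e.g., $w_H(x_i\ast x_j\ast x_p)$ only needs to be controlled mod~$2$) to reduce the quadratic and cubic coefficients to exactly the stated congruences, separating the cases according to how many of the indices come from $C_2$-basis vectors versus coset representatives; (v) conversely, check that~\eqref{eqn_28f}--\eqref{eqn_32f} suffice by reassembling the expansion and verifying every monomial coefficient vanishes mod~$8$.

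For the linearization in step~(v) one must also handle ``collision'' monomials — e.g. terms where an index is repeated, which collapse because $x_i\ast x_i = x_i$ and $a_i^2=a_i$ over $\mathbb F_2$ but not over $\mathbb Z$; care is needed since $w_H$ is integer-valued. The cleanest route is to fix the coefficient vectors and argue that the total expression, as a function of $(a,b,c)\in\mathbb F_2^{k_2}\times\mathbb F_2^{k_2}\times\mathbb F_2^k$, is a ``pseudo-Boolean'' function whose multilinear (Fourier/Möbius) expansion over the subsets has coefficients that are exactly the left-hand sides of~\eqref{eqn_28f}--\eqref{eqn_32f} up to the relevant power of~$2$; then ``vanishes mod~$8$ everywhere'' $\iff$ ``every multilinear coefficient vanishes mod~$8$'' is the standard finite-difference argument. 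One subtlety: the roles of $x_i$ inside $C_2$ and $x_i$ appearing inside the $y\oplus s_Z$ slot must be disentangled, so the monomials split into the three families $\{$all from $C_2$-basis$\}$, $\{$two from $C_2$, one coset$\}$, $\{$one from $C_2$, two cosets$\}$ — together with the degenerate cases where a $C_2$-index coincides in the two slots, which produce the $s_Z$-dependent terms in~\eqref{eqn_28f}, \eqref{eqn_29f}, \eqref{eqn_30f}.

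The main obstacle I expect is the mod-$8$ bookkeeping in step~(iii)--(iv): correctly tracking which Schur-product terms survive, pinning down the exact integer coefficient (including sign) of each surviving monomial after the two nested inclusion--exclusion expansions interact, and verifying that the divisibility consequences of Lemma~\ref{lemma_7} and~\eqref{eqn_12} are precisely strong enough to collapse the higher-order terms to the six listed congruences — no weaker, no stronger. Everything else is routine linear algebra over $\mathbb F_2$ plus the finite-difference principle.
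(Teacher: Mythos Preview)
Your overall approach---expand $w_H(x)-2w_H(x\ast(y\oplus s_Z))$ via the inclusion--exclusion identity~\eqref{eqn_1} and extract the coefficient of each monomial in the Boolean variables---is essentially the same as the paper's, which carries out the expansion explicitly for sufficiency and, for necessity, plugs in the specific inputs $x\in\{x_i,\,x_i\oplus x_j,\,x_i\oplus x_j\oplus x_p\}$ and $y\in\{0,\,y_q,\,y_q\oplus y_r\}$ (this is exactly the M\"obius/finite-difference extraction you describe, done by hand for the low-order subsets).

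There is, however, a genuine conceptual error in your step~(iv): you invoke Lemma~\ref{lemma_7}, equation~\eqref{eqn_12}, and ``the self-dual doubly-even origin'' to control the higher-order Schur products. None of that applies here. Lemma~\ref{lemma_21} is a statement about \emph{arbitrary} pairs $C_2\subseteq C_1$; there is no ambient $2^m$-divisible code, and Lemma~\ref{lemma_7} concerns only the specific punctured construction of Section~\ref{asymptoticallygoodsection}. The reason triple Schur products need only be controlled $\bmod\,2$ and quadruple-and-higher products drop out entirely is purely combinatorial: the coefficient $(-2)^{i-1}$ in~\eqref{eqn_1}, together with the overall factor $2$ on the second term, supplies the necessary powers of $2$. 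No code divisibility is used or available. If you tried to invoke~\eqref{eqn_12} you would find the hypothesis unmet; fortunately you don't need it.

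One further simplification you are missing: the paper first proves Corollary~\ref{corollary_4}, which shows it suffices to verify~\eqref{eqn_25} for $y\in C_1/C_2$ rather than all of $C_1$. This eliminates your $b_i$ variables entirely and removes the ``collision'' bookkeeping between the two slots that you flag as a subtlety. With that reduction in hand, the monomials are indexed only by subsets of $\{a_1,\dots,a_{k_2},c_1,\dots,c_k\}$, and the six families~\eqref{eqn_28f}--\eqref{eqn_32f} fall out directly.
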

\begin{proof}
    The proof is given in Appendix~\ref{appendix_2}.
\end{proof}
The following corollary is a consequence of Lemma~\ref{lemma_21}. This result was first noted in~\cite{RCNP2020} and subsequently proven in~\cite{CLMRSS2024}.
\begin{corollary}[Moreno et al.~\cite{CLMRSS2024}]\label{corollary_22}
If ${\css(C_1, C_2, s_Z)}$ is a CSS-T code for some $s_Z \in \mathbb{F}_2^n$, then $C_2 \ast C_1 \subseteq C_1^{\perp}$.
\end{corollary}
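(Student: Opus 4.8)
The plan is to derive the inclusion $C_2 \ast C_1 \subseteq C_1^{\perp}$ directly from the parity conditions in Lemma~\ref{lemma_21}. Recall that $C_1^{\perp}$ is precisely the set of vectors $v \in \mathbb{F}_2^n$ such that $w_H(v \ast w) = 0 \pmod 2$ for every $w \in C_1$; equivalently, it suffices to show that every generator $x_i \ast (\text{something in } C_1)$ of the space $C_2 \ast C_1$ is orthogonal to every basis element of $C_1$. Since $\{x_1,\dots,x_{k_2}\}$ is a basis of $C_2$ and $\{x_1,\dots,x_{k_2},y_1,\dots,y_k\}$ spans $C_1$, the space $C_2 \ast C_1$ is spanned by vectors of the form $x_i \ast x_j$, $x_i \ast x_j$ with $i=j$ (i.e.\ $x_i$ itself, since $x_i \ast x_i = x_i$), and $x_i \ast y_q$. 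So I need to check that each of these is in $C_1^{\perp}$, which means taking a further Schur product with an arbitrary basis vector of $C_1$ (either some $x_p$ or some $y_r$) and verifying the resulting weight is even.

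First I would enumerate the cases for $\langle u \ast w, \cdot\rangle$ with $u \in \{x_i, x_i\ast x_j, x_i \ast y_q\}$ and the extra $C_1$-basis vector ranging over $x_p$ or $y_r$. This produces exactly the six types of triple (and lower-order) Schur products appearing in Lemma~\ref{lemma_21}: $w_H(x_i \ast x_j \ast x_p)$, $w_H(x_i \ast x_j \ast y_q)$, $w_H(x_i \ast y_q \ast y_r)$, together with the degenerate collapses when indices coincide. The conditions \eqref{eqn_31f}, \eqref{eqn_33f}, and \eqref{eqn_32f} assert precisely that these triple products have even weight. For the degenerate cases — e.g.\ when we multiply $x_i$ by $x_p$ and want evenness of $w_H(x_i \ast x_p)$, or multiply $x_i \ast x_j$ by $x_i$ and get $w_H(x_i \ast x_j)$ — I would observe that even weight of $w_H(x_i \ast x_j)$ follows from \eqref{eqn_29f} (reducing mod $2$), even weight of $w_H(x_i \ast y_q)$ follows from \eqref{eqn_30f}, and even weight of $w_H(x_i)$ itself follows from \eqref{eqn_28f} (again reducing mod $2$). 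Idempotency of the Schur product ($x_i \ast x_i = x_i$, $y_q \ast y_q = y_q$) is what makes these collapses land on the lower-order conditions.

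Having shown that $u \ast w$ is orthogonal to every generator of $C_1$ for each generator $u$ of $C_2 \ast C_1$ and each generator $w$ of $C_1$, I would conclude by bilinearity of the inner product and of the Schur product: an arbitrary element of $C_2 \ast C_1$ is an $\mathbb{F}_2$-linear combination of the $u$'s, an arbitrary element of $C_1$ is a linear combination of the $w$'s, and orthogonality extends linearly. Hence $C_2 \ast C_1 \subseteq C_1^{\perp}$.

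The main obstacle — really the only subtle point — is bookkeeping the degenerate index collisions correctly: Lemma~\ref{lemma_21} is stated only for \emph{distinct} indices $i,j,p$ and \emph{distinct} $q,r$, so when I form $x_i \ast x_j \ast x_p$ with, say, $p = i$, I must not invoke \eqref{eqn_31f} but instead fall back to the mod-$2$ reduction of \eqref{eqn_29f}, and similarly for the other collisions. I would therefore organize the argument as a short case table indexed by which of the three indices coincide, confirming in each cell that some condition among \eqref{eqn_28f}--\eqref{eqn_32f}, read modulo $2$, supplies the needed evenness. Once that table is complete the result is immediate.
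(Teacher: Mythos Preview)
Your proposal is correct and follows essentially the same route as the paper: both extract from Lemma~\ref{lemma_21} that $w_H(x_i)$, $w_H(x_i\ast x_j)$, $w_H(x_i\ast y_q)$, $w_H(x_i\ast x_j\ast x_p)$, $w_H(x_i\ast x_j\ast y_q)$, and $w_H(x_i\ast y_q\ast y_r)$ are all even, then note that $C_2\ast C_1$ is spanned by $\{x_i\}\cup\{x_i\ast x_j\}\cup\{x_i\ast y_q\}$ and verify orthogonality of each generator to every basis vector of $C_1$. Your explicit attention to the index-collision cases is a bit more thorough than the paper's write-up, but the underlying argument is the same.
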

\begin{proof}
    The proof is given in Appendix~\ref{appendix_2}.
\end{proof}
An open problem was posed in~\cite{JA2025} that asked whether any pair $C_2 \subseteq C_1$ satisfying $C_2 \ast C_1 \subseteq C_1^{\perp}$ can form a CSS-T code ${\css(C_1, C_2, s_Z)}$ for some $s_Z$. The following example resolves this problem in the negative.

\begin{example}\label{ex:counterexample}
Consider $C_2 = \langle x_1,x_2,x_3,x_4,x_5 \rangle$ and $C_1 = \langle x_1,x_2,x_3,x_4,x_5, y_1 \rangle$,
where
    \begin{align*}
        y_1 := 000110110010100000101010101010101010 \\ 
        x_1 := 111111111100000000111111000011000000 \\
        x_2 := 111111000011110000111100110000110000 \\
        x_3 := 111100110011001100110000111100001100 \\
        x_4 := 111010101010101010110011001100000011 \\
        x_5 := 100111111111111111000000000000000000
     \end{align*}
It can be verified that for distinct $i, j, k \in [5]$,
\begin{equation*}
    w_H(x_i), \; w_H(x_i \ast x_j), \;  w_H(x_i \ast y_1) =  0 \pmod{2}  
\end{equation*}
\begin{equation*}
     w_H(x_i \ast x_j \ast x_k), \; w_H(x_i \ast x_j \ast y_1) = 0 \pmod{2}
\end{equation*}
The two equations above imply that $C_2 \ast C_1 \subseteq C_1^{\perp}$. Furthermore, we have
\begin{align}
x_5 \ast y_1 &= \bracket{x_1 \ast x_3} \oplus \bracket{x_2 \ast x_4} \nonumber \\
w_H(x_1 &\ast x_2 \ast x_3 \ast x_4 ) = 5 \label{eqn_39}
\end{align}

We now argue, by contradiction, that there exists no $s_Z \in \mathbb{F}_2^{36}$ such that ${\css\bracket{C_1, C_2, s_Z}}$ forms a CSS-T code. 
Suppose that there exists some $s_Z$ such that ${\css{(C_1, C_2, s_Z)}}$ forms a CSS-T code. Then, by~\eqref{eqn_30f}, we must have (modulo $4$)
    \begin{align*}
        0 &= w_H(x_5 \ast y_1) - 2 w_H(x_5 \ast y_1 \ast s_Z) \\
         &= w_H(x_1 \ast x_3) + w_H(x_2 \ast x_4) - 2 w_H(x_1 \ast x_2 \ast x_3 \ast x_4) \\
        & \ \ \ \ \ \ \ \ \ -2w_H(x_1 \ast x_3 \ast s_Z) - 2w_H(x_2 \ast x_4 \ast s_Z) 
    \end{align*}
    Now, using~\eqref{eqn_29f}, we obtain
    \begin{equation*}
        -2 w_H(x_1 \ast x_2 \ast x_3 \ast x_4 ) = 0 \pmod{4}
    \end{equation*}
    This contradicts~\eqref{eqn_39}, thereby completing the proof.
\end{example}

Consequently, $C_2 \ast C_1 \subseteq C_1^{\perp}$ is a necessary but not sufficient condition for CSS-T codes.

\subsection{Characterization of CSS-T codes wherein the transversal $T$ realizes the logical identity}
Rengaswamy et al.~\cite{RCNP2020} provided a characterization of CSS-T codes wherein the transversal $T$ realizes the logical identity: 

\textit{Suppose ${\css(C_1, C_2)}$ is a CSS-T code. The transversal $T$ realizes the logical identity iff for all $y_1, y_2 \in C_1$, 
\begin{equation*}
    \iota^{w_H(y_1 \ast y_2)} E(0, y_1 \ast y_2) \in \cS
\end{equation*}}

Applying Lemma~\ref{lemma_16} with $m=p=l=2$ yields a CSS-T code, $\css{(C_1^{(2)}, C_2^{(2)}, s_X=s_Z=0)}$, wherein the transversal $T$ realizes the logical transversal $Z$. However, this code satisfies the above characterization of~\cite{RCNP2020} (due to the $4$-fold repetition and the fact that ${C}_1^{(2)} \ast {C}_1^{(2)} \subseteq {({C}_1^{(2)})}^{\perp}$). There is thus a gap in the above characterization stemming from the fact that it does not account for the $Z$-signature.

The following theorem revises their characterization. An equivalent form of this result appears as a special case of Lemma~4 in~\cite{HLC2022div}.
\begin{theorem}\label{theorem_19}
    Suppose ${\css(C_1, C_2, s_Z)}$ is a CSS-T code. The transversal $T$ realizes the logical identity iff for all $y \in C_1 / C_2$,
    \begin{equation}\label{eqn_55}
        w_H(y) - 2 w_H(y \ast s_Z) = 0 \!\!\!\pmod{8}
    \end{equation}
\end{theorem}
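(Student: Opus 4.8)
The plan is to analyze the action of the physical transversal $T = R_Z(\pi/4)$ on an arbitrary logical basis state $\ket{a}_L$ given by~\eqref{eqn_24}, and to determine when this action collapses to a global phase that is independent of $a$ (which is precisely what ``logical identity'' means for a diagonal gate). First I would fix a basis $\{y_1,\dots,y_k\}$ of $C_1/C_2$, write $y_a = \oplus_i a_i y_i$ as in the preliminaries, and compute $T^{\otimes n} \ket{a}_L$ by applying $T^{\otimes n}$ termwise to the superposition $\sum_{x \in C_2} (-1)^{w_H(s_X \ast x)} \ket{y_a \oplus x \oplus s_Z}$. Since $T^{\otimes n}\ket{v} = \iota^{\,w_H(v)}\ket{v}$ for a computational basis vector $\ket{v}$, each term picks up the phase $\iota^{\,w_H(y_a \oplus x \oplus s_Z)}$. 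The key algebraic step is to expand $w_H(y_a \oplus x \oplus s_Z)$ using the inclusion-exclusion identity~\eqref{eqn_1} (truncated modulo $8$, since $\iota^4 = 1$ forces us to work mod $8$ in the exponent), separating the terms that depend on $a$ from those that do not.

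Next I would use the CSS-T hypothesis. Since $\css(C_1,C_2,s_Z)$ is assumed CSS-T, Theorem~\ref{theorem_3} (or equivalently Lemma~\ref{lemma_21}) guarantees that $\iota^{\,w_H(y_a \oplus x \oplus s_Z)}$ is, up to a sign absorbable by a stabilizer, equal to $\iota^{\,w_H(y_a \oplus s_Z)}$ times a factor independent of $x$ --- this is exactly the content of~\eqref{eqn_25}, which says $w_H(x) - 2w_H(x \ast (y \oplus s_Z)) \equiv 0 \pmod 8$ for all $x \in C_2$, $y \in C_1$. Concretely, expanding $w_H((y_a \oplus s_Z) \oplus x) = w_H(y_a \oplus s_Z) + w_H(x) - 2 w_H(x \ast (y_a \oplus s_Z))$ and invoking~\eqref{eqn_25} with $y = y_a$, one shows the $x$-dependent part of the phase is an even power of $\iota$, i.e. a sign $(-1)^{\langle \text{something}\rangle}$ that matches the sign of an $X$-stabilizer $E(x,0)$ acting on $\ket{y_a\oplus s_Z}$; this lets $T^{\otimes n}\ket{a}_L$ be rewritten as $\iota^{\,w_H(y_a \oplus s_Z)}\ket{a}_L$ after re-summing the stabilizer orbit. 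So $T^{\otimes n}$ acts on $\ket{a}_L$ as multiplication by $\iota^{\,w_H(y_a \oplus s_Z)}$.

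Finally I would expand $w_H(y_a \oplus s_Z) = w_H(y_a) + w_H(s_Z) - 2w_H(y_a \ast s_Z)$ and observe that $T^{\otimes n}$ is the logical identity iff $\iota^{\,w_H(y_a \oplus s_Z)}$ is the same for every $a \in \mathbb{F}_2^k$; discarding the global constant $\iota^{w_H(s_Z)}$, this is equivalent to $w_H(y_a) - 2w_H(y_a \ast s_Z) \equiv 0 \pmod 4$ for all $a$ --- but we must upgrade this to mod $8$. The upgrade to mod $8$ comes from the fact that a logical diagonal gate of order dividing $2$ that is trivial on every $\ket{a}_L$ must in fact contribute $0 \pmod 8$ rather than merely $0 \pmod 4$: more carefully, the ``logical identity'' condition should be phrased as $T^{\otimes n}$ restricted to the code space equaling the identity operator (not just a global phase), and comparing the phase on $\ket{0^k}_L$ (where $y_a = 0$, giving phase $\iota^{w_H(s_Z)}$) with the phase on a general $\ket{a}_L$ forces $w_H(y_a) - 2w_H(y_a \ast s_Z) \equiv 0 \pmod 8$, which is~\eqref{eqn_55}. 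The converse direction simply runs the same computation backwards: assuming~\eqref{eqn_55}, every logical basis state picks up the common phase $\iota^{w_H(s_Z)}$, and since $T^{\otimes n}$ already preserves the code space (it is a logical operator by the CSS-T assumption) and fixes a basis up to a uniform phase, it is the logical identity.

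\textbf{Main obstacle.} The delicate point is the mod-$4$-versus-mod-$8$ bookkeeping in the exponent of $\iota$: the inclusion-exclusion expansion of $w_H(y_a \oplus x \oplus s_Z)$ produces cross terms $w_H(x \ast (y_a \oplus s_Z))$, $w_H(x\ast y_a)$, $w_H(y_a\ast s_Z)$ etc., and one must carefully track which of these are killed modulo $8$ by the CSS-T relations~\eqref{eqn_28f}--\eqref{eqn_32f} and which survive into the final condition~\eqref{eqn_55}. Getting the modulus exactly right --- in particular justifying why the residual condition is mod $8$ and not mod $4$, and confirming that the $a$-dependence genuinely factors through $y_a$ rather than through the individual $a_i$ (which requires the bilinear terms $w_H(y_q \ast y_r)$ to behave correctly, cf.~\eqref{eqn_32f}) --- is where the real work lies; everything else is a routine unwinding of definitions~\eqref{eqn_20} and~\eqref{eqn_24}.
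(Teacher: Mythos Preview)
Your overall strategy is correct and matches the paper's: compute $T^{\otimes n}\ket{a}_L$, show it equals a phase times $\ket{a}_L$, and then compare the phase at general $a$ with the phase at $a=0$. However, you have a computational slip that is generating all of your ``main obstacle'' anxiety: the single-qubit $T$ gate applies the phase $e^{\iota\pi/4}$, not $\iota$. That is, $T^{\otimes n}\ket{v} = e^{\iota\pi/4\, w_H(v)}\ket{v}$, and since $(e^{\iota\pi/4})^8=1$ the exponent lives modulo $8$ from the start. With this corrected, the CSS-T condition~\eqref{eqn_25} kills the entire $x$-dependent part $w_H(x)-2w_H(x\ast(y_a\oplus s_Z))$ modulo $8$ outright (not merely modulo $4$, and not merely up to a stabilizer sign), giving directly $T^{\otimes n}\ket{a}_L = e^{\iota\pi/4\, w_H(y_a\oplus s_Z)}\ket{a}_L$. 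Expanding $w_H(y_a\oplus s_Z)=w_H(s_Z)+w_H(y_a)-2w_H(y_a\ast s_Z)$ and comparing with $a=0$ then yields~\eqref{eqn_55} immediately, with no mod-$4$-to-mod-$8$ upgrade needed and no worry about whether the dependence factors through $y_a$ (it manifestly does).

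The paper packages the first half of this computation as Lemma~\ref{lemma_6}, and derives it slightly more slickly: rather than summing over $x\in C_2$ termwise and invoking~\eqref{eqn_25}, it uses that $T^{\otimes n}$ and $E(y_a,0)$, being logical operators, commute with the projector $P_{\cS}$, so $T^{\otimes n}\ket{a}_L = P_{\cS}\, T^{\otimes n}\ket{y_a\oplus s_Z} = e^{\iota\pi/4\, w_H(y_a\oplus s_Z)}\ket{a}_L$ in one line. Your termwise route is equivalent (it is essentially re-proving the sufficiency direction of Theorem~\ref{theorem_3}), just slightly longer.
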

\begin{proof}
    The proof is given in Appendix~\ref{appendix_d}.
\end{proof}

\subsection{Characterization of CSS-T codes wherein the transversal $T$ realizes the logical transversal $T$}
Rengaswamy et al.~\cite{RCNP2020} provided a characterization of CSS-T codes wherein the transversal~$T$ realizes the logical transversal $T$; however, this characterization also does not account for the $Z$-signature, $s_Z$. The following theorem revises their characterization. An equivalent form of this result appears as a special case of Theorem~5 in~\cite{HLC2022div}.
\begin{theorem}\label{theorem_20}
    Suppose $\css{(C_1, C_2, s_Z)}$ is an ${[[n,k]]}_2$ CSS-T code. Let $\{y_1, \ldots, y_k\}$ be a basis of the coset space ${C_1 / C_2}$. The transversal $T$ realizes the logical transversal $T$ (without any Clifford correction) iff for any $a=(a_1,\ldots,a_k) \in \mathbb{F}_2^k$ and $y_a = \oplus_{i=1}^k a_i y_i \in C_1 / C_2$,
    \begin{equation}\label{eqn_17}
        w_H(y_a) - 2 w_H(y_a \ast s_Z) \; = \; w_H(a) \!\!\!\pmod{8}
    \end{equation}
\end{theorem}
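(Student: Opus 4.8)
The plan is to mimic the proof technique that presumably establishes Theorem~\ref{theorem_3} and Lemma~\ref{lemma_21} (the independent proof in Appendix~\ref{appendix_2}): write out the action of the physical transversal $T = \bigotimes_{j=1}^n T_j$ on the logical basis state $\ket{a}_L$ from~\eqref{eqn_24}, and demand that the result equal $T_L \ket{a}_L$ up to the correct global phase, where $T_L$ is the logical transversal $T$. First I would recall that $T^{\otimes n}$ is diagonal in the computational basis, acting on $\ket{v}$ (for $v \in \mathbb{F}_2^n$) as multiplication by $e^{\iota \pi w_H(v)/4} = \omega^{w_H(v)}$ with $\omega = e^{\iota\pi/4}$. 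Applying this to the expression~\eqref{eqn_24} for $\ket{a}_L$, each summand $\ket{y_a \oplus x \oplus s_Z}$ picks up the phase $\omega^{w_H(y_a \oplus x \oplus s_Z)}$. On the other side, the logical transversal $T$ acting on the encoded qubits multiplies the logical basis state $\ket{a}$ by $\omega^{w_H(a)}$ (since on $k$ logical qubits, $T^{\otimes k}\ket{a} = \omega^{w_H(a)}\ket{a}$), so $T_L\ket{a}_L = \omega^{w_H(a)}\ket{a}_L$, i.e.\ $T_L\ket{a}_L = \frac{\omega^{w_H(a)}}{\sqrt{|C_2|}} \sum_{x\in C_2} (-1)^{w_H(s_X\ast x)} \ket{y_a \oplus x \oplus s_Z}$.

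The key step is then to compare the two expressions coefficient-by-coefficient over the distinct kets indexed by $x \in C_2$. Equating the coefficient of $\ket{y_a\oplus x\oplus s_Z}$ gives, for every $x \in C_2$,
\begin{equation*}
    \omega^{w_H(y_a \oplus x \oplus s_Z)} \;=\; \omega^{w_H(a)} \, (-1)^{w_H(s_X \ast x) - w_H(s_X\ast x)} \cdot \big(\text{matching sign factor}\big),
\end{equation*}
so that, after cancelling the common $(-1)^{w_H(s_X\ast x)}$ factor on both sides, one needs $w_H(y_a\oplus x\oplus s_Z) \equiv w_H(a) \pmod 8$ for all $x\in C_2$. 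Now I would expand $w_H(y_a\oplus x\oplus s_Z)$ using the inclusion–exclusion identity~\eqref{eqn_1}: writing $u = y_a$, $x$, $s_Z$ as three summands,
\begin{equation*}
    w_H(u\oplus x\oplus s_Z) = w_H(u) + w_H(x) + w_H(s_Z) - 2w_H(u\ast x) - 2w_H(u\ast s_Z) - 2w_H(x\ast s_Z) + 4 w_H(u\ast x\ast s_Z).
\end{equation*}
The point is that the hypothesis that $\css{(C_1,C_2,s_Z)}$ is already a CSS-T code — equivalently the six congruences~\eqref{eqn_28f}--\eqref{eqn_32f} of Lemma~\ref{lemma_21}, together with $2^2$-divisibility consequences — forces all the $x$-dependent terms to vanish modulo $8$. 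Concretely, by the CSS-T characterization~\eqref{eqn_25} applied with $y = y_a$ one has $w_H(x) - 2w_H(x\ast(y_a\oplus s_Z)) \equiv 0 \pmod 8$, and $x\ast(y_a\oplus s_Z) = (x\ast y_a)\oplus(x\ast s_Z)$, whose weight again expands via~\eqref{eqn_1}; after the dust settles the combination $w_H(x) + w_H(s_Z) - 2w_H(y_a\ast x) - 2w_H(x\ast s_Z) + 4w_H(y_a\ast x\ast s_Z)$ collapses to $w_H(s_Z) - 2w_H(y_a\ast s_Z) \pmod 8$ for every $x\in C_2$. Substituting back leaves the condition $w_H(y_a) + w_H(s_Z) - 2w_H(y_a\ast s_Z) \equiv w_H(a)\pmod 8$, and a single application of the CSS-T condition with $y=0$ (or the observation that $0\in C_1$, giving $w_H(s_Z)\equiv 2w_H(s_Z\ast s_Z) = 2w_H(s_Z)$, hence $w_H(s_Z)\equiv 0$) removes the lone $w_H(s_Z)$ term, yielding exactly~\eqref{eqn_17}. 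The converse direction runs the same chain of equalities backwards: assuming~\eqref{eqn_17}, the coefficient comparison shows $T^{\otimes n}$ maps each $\ket{a}_L$ to $\omega^{w_H(a)}\ket{a}_L = T_L\ket{a}_L$, so the transversal $T$ realizes the logical transversal $T$ with no Clifford correction.

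The main obstacle I anticipate is the bookkeeping in the modular reduction: one must carefully track which cross-terms live modulo $8$ versus modulo $4$ versus modulo $2$, and invoke the right congruence from Lemma~\ref{lemma_21} (e.g.\ $w_H(x_i\ast y_q\ast y_r)\equiv 0\pmod 2$, so $4$ times it is $0 \pmod 8$) to kill each unwanted term. In particular, showing that the $x$-dependence fully cancels — rather than leaving a residual term depending on which coset representative of $y_a$ one picked — is the delicate part, and it is exactly where the full strength of the CSS-T hypothesis (all six conditions, not just $C_2\ast C_1\subseteq C_1^\perp$) gets used. A secondary subtlety is confirming that ``without any Clifford correction'' is the precise statement: one should note that a nonzero residual linear-in-$a$ term of the form $2\,\langle c, a\rangle$ would correspond to a logical $S$-type correction, and~\eqref{eqn_17} is exactly the condition under which that correction is trivial.
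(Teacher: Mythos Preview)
Your overall strategy---compute $T^{\otimes n}\ket{a}_L$ term by term, use the CSS-T hypothesis~\eqref{eqn_25} to kill the $x$-dependence, and compare against $\omega^{w_H(a)}\ket{a}_L$---is sound and is essentially what the paper does. The paper packages the first two steps into Lemma~\ref{lemma_6}, which exploits the fact that $T^{\otimes n}$, being a logical operator, commutes with the code projector $P_{\cS}$; this immediately yields
\[
T^{\otimes n}\ket{a}_L \;=\; e^{\iota\frac{\pi}{4}w_H(s_Z)}\cdot e^{\iota\frac{\pi}{4}\bigl(w_H(y_a)-2w_H(y_a\ast s_Z)\bigr)}\ket{a}_L
\]
without any explicit cancellation of $x$-terms. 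Your route via inclusion--exclusion and~\eqref{eqn_25} arrives at the same formula, just with more bookkeeping.

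There is, however, a genuine error in how you dispose of the residual $w_H(s_Z)$ term. You claim that ``the CSS-T condition with $y=0$'' or ``the observation that $0\in C_1$, giving $w_H(s_Z)\equiv 2w_H(s_Z\ast s_Z)=2w_H(s_Z)$'' forces $w_H(s_Z)\equiv 0\pmod 8$. Neither argument is valid: condition~\eqref{eqn_25} quantifies over $x\in C_2$, and $s_Z$ is an arbitrary coset representative in $\mathbb{F}_2^n/C_1$ with no reason to lie in $C_2$, so you cannot substitute $x=s_Z$. In general $w_H(s_Z)\not\equiv 0\pmod 8$, and the theorem does not require it.

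The correct resolution---and what the paper does---is to recognize $e^{\iota\frac{\pi}{4}w_H(s_Z)}$ as an \emph{allowed global phase}. ``Realizing the logical transversal $T$'' means $T^{\otimes n}\ket{a}_L=\gamma_T\,\omega^{w_H(a)}\ket{a}_L$ for some $a$-independent constant $\gamma_T$. In the necessity direction, evaluate at $a=0$ (so $y_a=0$) to read off $\gamma_T=\omega^{w_H(s_Z)}$; then for general $a$ the comparison gives exactly~\eqref{eqn_17} with no leftover $w_H(s_Z)$. In the sufficiency direction,~\eqref{eqn_17} yields $T^{\otimes n}\ket{a}_L=\omega^{w_H(s_Z)}\cdot\omega^{w_H(a)}\ket{a}_L$, which is the logical transversal $T$ up to the global phase $\omega^{w_H(s_Z)}$. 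Your last paragraph even anticipates linear-in-$a$ residuals corresponding to Clifford corrections; the $w_H(s_Z)$ term is the $a$-independent residual, corresponding to a global phase, and that one is always permitted.
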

\begin{proof}
    The proof is given in Appendix~\ref{appendix_d}.
\end{proof}

\subsection{Comparison of asymptotically good CSS-T codes in~\cite{BDMJL2025} and Theorem~\ref{theorem_11}}
Asymptotically good CSS-T codes in~\cite{BDMJL2025} are obtained by doubling arbitrary families of asymptotically good CSS codes. While~\cite{BDMJL2025} verifies that these codes satisfy the condition $C_2 \ast C_1 \subseteq C_1^{\perp}$, it does not explicitly establish the existence of $s_Z$ such that $\css{(C_1, C_2, s_Z)}$ forms a CSS-T code.

\begin{lemma}\label{lemma_19}
    Let $C_2 \subseteq C_1$ be a CSS pair. Let ${C}_1^{(1)}$ and ${C}_2^{(1)}$ denote the codes obtained by doubling $C_1$ and $C_2$, respectively. Then, $\css{({C}_1^{(1)}, {C}_2^{(1)},s_Z =(1, 0))}$ is a CSS-T code wherein the transversal $T$ realizes the logical identity. Here, $1$ and $0$ denote the all-ones and all-zeros vectors of the same length.
\end{lemma}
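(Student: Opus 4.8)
The plan is to verify, in order, the two characterizations from the preceding sections: first that $\css({C}_1^{(2)}, {C}_2^{(2)}, s_Z=(1,0))$ satisfies the CSS-T condition of Theorem~\ref{theorem_3}, and then that it satisfies the logical-identity condition of Theorem~\ref{theorem_19}. Write $\mathbf{1}$ and $\mathbf{0}$ for the all-ones and all-zeros vectors of length $n$ (so $s_Z = (\mathbf{1},\mathbf{0})$), and recall the doubling structure: every codeword of ${C}_2^{(2)}$ has the form $(x,x)$ with $x\in C_2$, every codeword of ${C}_1^{(2)}$ has the form $(y,y)$ with $y\in C_1$, and doubling a basis of $C_1/C_2$ yields a basis of ${C}_1^{(2)}/{C}_2^{(2)}$, so each class there has a representative $(y_a,y_a)$ with $y_a\in C_1/C_2$.

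The only computational input is the elementary identity $w_H\big(x\ast(v\oplus\mathbf{1})\big) = w_H(x) - w_H(x\ast v)$ for $x,v$ of equal length, which follows coordinatewise from $x_i(1\oplus v_i) = x_i - x_i v_i$. For the CSS-T check, fix $x\in C_2$ and $y\in C_1$; then $(y,y)\oplus s_Z = (y\oplus\mathbf{1},\,y)$, so
\[
 (x,x)\ast\big((y,y)\oplus s_Z\big) \;=\; \big(x\ast(y\oplus\mathbf{1}),\; x\ast y\big),
\]
which has Hamming weight $\big(w_H(x)-w_H(x\ast y)\big) + w_H(x\ast y) = w_H(x)$. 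Since $w_H((x,x)) = 2w_H(x)$, the left-hand side of~\eqref{eqn_25} equals $2w_H(x) - 2w_H(x) = 0$; this holds for all $x,y$, so $\css({C}_1^{(2)}, {C}_2^{(2)}, (1,0))$ is a CSS-T code, for every CSS pair $C_2\subseteq C_1$.

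Granting this, I would invoke Theorem~\ref{theorem_19} with the doubled representatives $(y_a,y_a)$: here $(y_a,y_a)\ast(\mathbf{1},\mathbf{0}) = (y_a,\mathbf{0})$ has weight $w_H(y_a)$, while $w_H((y_a,y_a)) = 2w_H(y_a)$, so the left-hand side of~\eqref{eqn_55} is again $2w_H(y_a)-2w_H(y_a) = 0 \pmod 8$, valid for every $a$. Hence the transversal $T$ realizes the logical identity, as claimed. I do not anticipate a genuine obstacle here: the argument is a direct unwinding of definitions, and the only points deserving a sentence of care are that $(\mathbf{1},\mathbf{0})$ is a legitimate representative of a class in $\mathbb{F}_2^{2n}/{C}_1^{(2)}$ and that the natural doubled basis is the one supplied to Theorem~\ref{theorem_19}; it is also worth remarking that replacing $s_Z$ by $(\mathbf{1},\mathbf{0})\oplus(u,u)$ for any $u\in\mathbb{F}_2^n$ leaves both computations unchanged, consistent with working modulo ${C}_1^{(2)}$.
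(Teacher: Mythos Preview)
Your proof is correct and follows essentially the same route as the paper's: verify Theorem~\ref{theorem_3} via $w_H((x,x)) - 2\,w_H\big((x,x)\ast((y,y)\oplus(\mathbf{1},\mathbf{0}))\big)=2w_H(x)-2w_H(x)=0$, then verify Theorem~\ref{theorem_19} via $w_H((y_a,y_a))-2\,w_H\big((y_a,y_a)\ast(\mathbf{1},\mathbf{0})\big)=2w_H(y_a)-2w_H(y_a)=0$. Your added remarks on the complement identity and on invariance under shifts by $(u,u)$ are correct but inessential to the argument.
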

\begin{proof}
The proof is given in Appendix~\ref{appendix_d}.
\end{proof}
\begin{remark}
        Theorem~\ref{theorem_11} constructs CSS-T codes by doubling CSS codes derived from punctured doubly-even codes. Hence, both $\css{({C}_1^{(1)}, {C}_2^{(1)}, s_Z=0)}$ and $\css{({C}_1^{(1)}, {C}_2^{(1)}, s_Z=({1}, {0}))}$ are valid CSS-T codes. We highlight that the choice of $s_Z$ determines the logical action of the transversal $T$: it realizes the logical transversal $S^{\dagger}$ in the former code, and the logical identity in the latter. 
\end{remark}

\section{Conclusion}\label{conclusion}
In this work, we demonstrated the existence of asymptotically good families of $(1)$ CSS codes that fault-tolerantly realize the logical transversal Clifford group, and $(2)$ CSS-T codes wherein the transversal $T$ implements the logical transversal $S^{\dagger}$. It is shown in~\cite{BDMJL2025} that the logical action induced by the transversal $T$ in any doubled CSS-T code has order dividing $4$; the codes in our asymptotically good CSS-T family achieve the maximum order (as the order of $S^{\dagger}$ is $4$). 

Ideally, one seeks asymptotically good CSS-T codes wherein the transversal $T$ realizes the logical transversal $T$. To this end, we pose the following open problem: \textit{Does there exist a family of asymptotically good triply-even ($8$-divisible) codes whose duals are also asymptotically good?} An affirmative answer implies the existence of asymptotically good CSS-T codes wherein the transversal $T$ realizes the logical transversal $T^{\dagger}$, via Lemma~\ref{lemma_16} (logical transversal $T$ can be derived using logical transversal $S$, as $T^\dagger S = T$). Furthermore, as the CSS families established in this work are not LDPC, the corresponding existence questions remain open in the LDPC setting.

In the latter half of this work, we analyzed CSS-T codes, establishing that the condition $C_2 \ast C_1 \subseteq C_1^{\perp}$ is necessary but not sufficient. We also revised the characterizations provided in~\cite{RCNP2020} for CSS-T codes wherein the transversal $T$ realizes the logical identity and the logical transversal $T$, respectively.

\bibliographystyle{ieeetr}
\nocite{*}
\bibliography{references}

\onecolumn

\section{Appendix}

\subsection{Proof of Lemma~\ref{lemma_7}, Lemma~\ref{lemma_11}, and Lemma~\ref{lemma_16}}\label{appendix_lemma_1}
The proposition below summarizes the properties of the matrix $A$ required for subsequent proofs.
\begin{proposition}\label{proposition_6} 
Let $r_1, \ldots, r_k$ denote the rows of $A$. The following statements hold:
\begin{enumerate} 
    \item For integers $\{ i_1, \ldots, i_q \} \subseteq [k]$ such that $2 \leq q \leq \min\{m, k\}$, the rows $r_{{i_1}}, \ldots, r_{i_{q}}$ of $A$ satisfies:
        \begin{equation}\label{eqn_49a}
            w_H(r_{i_1} \ast \ldots \ast r_{i_q}) \; = \; 0 \!\pmod{2^{m-q+1}}.
        \end{equation}
    \item Each row $r_i$ of $A$ satisfies:
    \begin{equation}\label{eqn_50}
        w_H(r_i) \; = \; -1 \!\pmod{2^m}.
    \end{equation}
\end{enumerate}
\end{proposition}

\begin{proof} \text{ }
\begin{enumerate}
    \item Since $A$ is a submatrix of the systematic generator matrix $G_C$, each row $r_i$ of $A$ extends to a codeword $\widetilde{r}_i = (e_i, r_i) \in C$, where $e_i$ denotes the standard basis vector. Since $C$ is a $2^m$-divisible code (and $m \geq 2$), applying~\eqref{eqn_12} yields:
    \begin{align*}
        0 \!\pmod{2^{m-q+1}} \; &= \; w_H(\widetilde{r}_{i_1} \ast \ldots \ast \widetilde{r}_{i_q}) \\
        &=\;w_H(e_{i_1} \ast \ldots \ast e_{i_q}) + w_H(r_{i_1} \ast \ldots \ast r_{i_q}) \\
        &=\; w_H(r_{i_1} \ast \ldots \ast r_{i_q}) 
    \end{align*}
    where the last equality follows because the vectors $e_i$ have pairwise disjoint supports for distinct rows.
    \item Since $C$ is a $2^m$-divisible code, for any $i \in [k]$, we have:
    \begin{equation*}
       0 \!\pmod{2^m} \; = \; w_H(\widetilde{r}_i) \; = \; w_H(e_i) + w_H(r_i) \; = \; 1 + w_H(r_i)
    \end{equation*}

\end{enumerate}
\end{proof}

The next proposition summarizes the properties of the codes $C_1$ and $C_2$ required for subsequent proofs.
\begin{proposition}\label{proposition_22} The following statements hold:
    \begin{enumerate}
        \item The code $C_2$ is $2^m$-divisible.
    \item For all $x \in C_2$, $y \in C_1$:
    \begin{equation*}
        w_H(x \ast y) \; = \; 0 \!\pmod{2^{m-1}}.
    \end{equation*}
    \item Let $y_1, \ldots, y_{t}$ be the rows of $G_{C_1/ C_2}$. For any $a = (a_1, \ldots, a_t) \in \mathbb{F}_2^{t}$, the coset representative $y_a = \oplus_{i=1}^{t} a_i y_i$ satisfies:
    \begin{equation*}
        w_H(y_a) \; = \; -w_H(a) \!\pmod{2^{m}}.
    \end{equation*}
    \end{enumerate}
\end{proposition}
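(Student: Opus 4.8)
\textbf{Proof proposal for Proposition~\ref{proposition_22}.}

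The plan is to prove the three items in order, deriving each from the properties of $A$ already established in Proposition~\ref{proposition_6}, together with the divisibility identity~\eqref{eqn_1}–\eqref{eqn_12}. For Item~1, I would observe that $C_2$ is generated by the last $k-t$ rows of $G_{C_1}$, and each such generator is obtained from a codeword of $C$ by deleting the first $t$ coordinates — specifically, row $j$ of $G_{C_2}$ is $(e_{j}, A_2^{(j)})$ viewed inside the punctured block, i.e.\ a codeword $\widetilde r_{t+j} = (e_{t+j}, r_{t+j}) \in C$ with its first $t$ (zero) coordinates removed. Since puncturing coordinates that are identically zero on a subcode does not change any Hamming weights, every codeword of $C_2$ has the same weight as the corresponding codeword of the subcode of $C$ supported away from the first $t$ coordinates; hence $C_2$ inherits $2^m$-divisibility from $C$. (One must check that puncturing the first $t$ coordinates of $C$ genuinely restricts to this subcode on the relevant generators — this is exactly the systematic structure in~\eqref{eqn_51}, where the bottom block has zeros in the punctured columns.)

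For Item~2, I would expand $x = \oplus_i a_i g_i$ and $y = \oplus_j b_j h_j$ in terms of the generators of $C_2$ and $C_1$ respectively, and apply~\eqref{eqn_1} to $w_H(x \ast y)$. The key point is that $x \ast y$ is an $\mathbb{F}_2$-linear combination of Schur products $g_{i_1} \ast \cdots \ast g_{i_a} \ast h_{j_1} \ast \cdots \ast h_{j_b}$; since all of $g_i, h_j$ lift to codewords of $C$ (up to the harmless zero coordinates removed by puncturing) and $C$ is $2^m$-divisible, each such product of $s \geq 1$ distinct lifted generators has weight $\equiv 0 \pmod{2^{m-s+1}}$ by~\eqref{eqn_12}. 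Substituting into~\eqref{eqn_1} and tracking the $(-2)^{s-1}$ factors, the $s$-fold terms contribute multiples of $2^{s-1} \cdot 2^{m-s+1} = 2^m$. The single subtlety is the $s=1$ contribution: a term like $w_H(g_i)$ or $w_H(h_j)$ alone. But $x \ast y$ with $x \in C_2$ always involves at least one $C_2$-generator \emph{and} at least one $C_1$-generator in any surviving monomial, so in fact every monomial is a product of $\geq 2$ distinct codewords of $C$, giving weight $\equiv 0 \pmod{2^{m-1}}$ directly from~\eqref{eqn_12} with $i \geq 2$ — no need even to unfold~\eqref{eqn_1}. The only case needing care is when a $C_2$-generator coincides with a $C_1$-generator (so the monomial is a single square $g \ast g = g$), but by the coset structure the $y_j$'s are independent of $C_2$, so this cannot reduce a genuine cross term to a single generator; I would phrase this carefully.

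For Item~3, $y_a = \oplus_{i=1}^t a_i y_i$ where $y_i$ is the $i$-th row of $G_{C_1/C_2}$, which by~\eqref{eqn_51} is $(0, A_1^{(i)})$ in the punctured block, the deletion of $(e_i, r_i) \in C$'s first $t$ coordinates. Applying~\eqref{eqn_1} to $w_H(\oplus_i a_i y_i)$ and then separately to $w_H(\oplus_i a_i \widetilde r_i)$ (the un-punctured version, which is a codeword of $C$ hence $\equiv 0 \pmod{2^m}$), the difference between the two is exactly the contribution of the deleted coordinates. On those first $t$ coordinates, $\widetilde r_i$ restricts to $e_i$, so $\oplus_i a_i e_i$ has weight $w_H(a)$; hence $w_H(\oplus_i a_i \widetilde r_i) = w_H(a) + w_H(y_a)$, and since the left side is $\equiv 0 \pmod{2^m}$ we get $w_H(y_a) \equiv -w_H(a) \pmod{2^m}$. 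The main obstacle throughout is bookkeeping the puncturing correctly — making sure that ``a generator of $C_2$ (or of $C_1/C_2$) is a punctured codeword of $C$'' is stated precisely enough that the divisibility transfers, and in particular that the first $t$ coordinates behave as claimed ($e_i$ on the coset generators, $0$ on $C_2$). Once that dictionary is nailed down, Items~2 and~3 are short applications of~\eqref{eqn_1}/\eqref{eqn_12}, and Item~1 is immediate.
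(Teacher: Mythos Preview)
Your Items~1 and~3 are correct. Item~3 is in fact cleaner than the paper's route: the paper expands $w_H(y_a)$ via~\eqref{eqn_1} and then invokes Proposition~\ref{proposition_6} term by term (Eqs.~\eqref{eqn_49a} and~\eqref{eqn_50}) to reduce modulo $2^m$ to $\sum_j a_j w_H(y_j) \equiv -w_H(a)$, whereas you simply observe that the lift $\oplus_i a_i \widetilde{r}_i \in C$ has weight exactly $w_H(a) + w_H(y_a)$ (the first $t$ coordinates contribute $w_H(a)$, the remaining $n-t$ contribute $w_H(y_a)$) and is $\equiv 0 \pmod{2^m}$ by divisibility of $C$. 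Both work; yours bypasses Proposition~\ref{proposition_6} entirely.

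Item~2 is where you take an unnecessary detour, and the sketch as written does not close. The paper uses exactly the lifting trick you deploy so cleanly for Item~3: since $(0^t, x)$ and $(y', y)$ are both in $C$, a single application of~\eqref{eqn_12} with $i=2$ gives $w_H(x \ast y) = w_H\big((0^t,x) \ast (y',y)\big) \equiv 0 \pmod{2^{m-1}}$ --- no expansion in generators needed. Your accounting ``$2^{s-1} \cdot 2^{m-s+1} = 2^m$'' is off: if you apply~\eqref{eqn_1} to $x \ast y = \oplus_{i,j} a_i b_j\, (g_i \ast h_j)$, the summands are the pairwise products $g_i \ast h_j$, so an $s$-fold Schur product of them can involve up to $2s$ distinct codewords of $C$; Eq.~\eqref{eqn_12} then yields only $2^{m-2s+1}$ divisibility, and with the $(-2)^{s-1}$ factor you get $2^{m-s}$, which is insufficient for $s \geq 2$. (A correct generator-level argument needs a \emph{bilinear} inclusion--exclusion expansion of $\sum_\ell x_\ell y_\ell$, which does recover $2^{m-1}$ per term --- but that is more work than the one-line lift.) Your coincidence remark is also slightly off: any basis of $C_1$ extending a basis of $C_2$ necessarily has some $h_j$ equal to some $g_i$, so the ``cannot reduce to a single generator'' claim is false --- though harmless, since $g_i \ast g_i = g_i$ has weight $\equiv 0 \pmod{2^m}$ by Item~1.
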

\begin{proof} \text{ }
    \begin{enumerate}
    \item For any $x \in C_2$, the vector $(0^t, x)$ lies in $C$ by construction. Since $C$ is a $2^m$-divisible code, $C_2$ inherits this property. 

    \item For any $x\in C_2$, $y \in C_1$, there exists a $y' \in \mathbb{F}_2^{t}$ such that $(0^t,x), (y',y) \in C$. Applying~\eqref{eqn_12}, we get:
        \begin{equation*}
            0 \!\pmod{2^{m-1}} \; = \; w_H \big( (0^t, x) \ast {(y',y)} \big) \; = \; w_H(x \ast y)
        \end{equation*}

    \item As in Proposition~\ref{proposition_6}, let $r_1, \ldots, r_t$ denote the first $t$ rows of $A$. By construction, the rows of the matrix $G_{C_1/ C_2}$ can be expressed as: $y_1 = (0^{k-t}, r_1), \ldots, y_t = (0^{k-t}, r_t)$. Therefore, 
        \begin{align}
            w_H(y_a) &= \; \sum_{i=1}^{t} {(-2)}^{i-1} \Bigg( \sum_{ \substack{\{j_1,  \ldots, j_i\} \\ \subseteq [t]}} \Big( \prod_{p=1}^{i} a_{j_p} \Big) w_H(y_{j_1} \ast \ldots \ast y_{j_i}) \Bigg) \label{eqn_50a} \\
           &= \sum_{i=1}^{t} {(-2)}^{i-1} \Bigg( \sum_{ \substack{\{j_1,  \ldots, j_i\} \\ \subseteq [t]}} \Big( \prod_{p=1}^{i} a_{j_p} \Big) w_H(r_{j_1} \ast \ldots \ast r_{j_i}) \Bigg) \\
            &= \; \sum_{j=1}^t a_{j} \; w_H(r_j)  \!\pmod{2^m}  \label{eqn_51a} \\
            &= \; - w_H(a) \!\pmod{2^m} \label{eqn_52a}
        \end{align}
        where Eqs.~\eqref{eqn_50a},~\eqref{eqn_51a} and~\eqref{eqn_52a} follow from~\eqref{eqn_1},~\eqref{eqn_49a} and~\eqref{eqn_50}, respectively.
    \end{enumerate}
\end{proof}
\newpage
\begin{proof}[Proof of Lemma~\ref{lemma_7}] \text{ }
    \begin{enumerate}
        \item By construction (Eq.~\eqref{eqn_51}), we have $\dim(C_2) = k - t$. Furthermore, the code $C_1$ is obtained by puncturing $t < d_{\min}(C)$ coordinates from the code $C$. Because the number of punctured coordinates is strictly less than the minimum distance of $C$, the puncturing operation is bijective. Consequently, the cardinality of the code is preserved, yielding $\dim(C_1) = \dim(C) = k$.

        \item For any $x \in C_1$, by the puncturing construction, there exists $x' \in \mathbb{F}_2^{t}$ such that $(x', x) \in C$. Since $w_H(x') \leq t$ and $d_{\min}(C) = d$, we obtain:
            \begin{equation*}
                d_{\min}(C_1) \; \geq \;  d - t \; \geq \; t
            \end{equation*}
            Regarding $C_2^{\perp}$, observe that the generator matrix of $C_2^{\perp}$ $(G_{C_2^{\perp}})$ forms a submatrix of the generator matrix of $C^{\perp}$ $(G_{C^\perp})$:
            \begin{align*}
                G_{C_2^{\perp}}  &= \; \left[\begin{array}{c|c}  A_2^T & I_{n - k} \end{array}\right]  \\  G_{C^{\perp}} &= \; \left[\begin{array}{c|c} {A}^T & I_{n-k} \end{array}\right] \; = \; \left[\begin{array}{c|c} A_1^T & G_{C_2^{\perp}} \end{array}\right]
            \end{align*}
            Thus, any $z \in C_2^{\perp}$ extends to $(z',z) \in C^{\perp}$ for some $z' \in \mathbb{F}_2^t$. Since $w_H(z') \leq t$ and $d_{\min}(C^\perp)= d^{\perp}$, we obtain:
            \begin{equation*}
                d_{\min} (C_2^{\perp}) \; \geq \; d^{\perp} - t \; \geq \; t
            \end{equation*}

        \item Proposition~\ref{proposition_22} (Item~2) implies that the vectors in $C_2$ and $C_1$ are mutually orthogonal; hence, $C_2 \subseteq C_1^{\perp}$. If $C$ is self-dual then $k = \frac{n}{2}$. Therefore, $\dim(C_2) = \frac{n}{2} - t$ and $\dim(C_1^{\perp}) = n - t - k = \frac{n}{2} - t$, which implies that $C_2 = C_1^{\perp}$.
    \end{enumerate}
\end{proof}


\begin{proof}[Proof of Lemma~\ref{lemma_11}] \text{ }
    \begin{enumerate}
        \item The $2^p$-fold repetition preserves code dimensions; hence, $\dim(C_i^{(p)}) = \dim(C_i)$ for $i \in \{1, 2\}$.
        \item $d_X^{(p)} = 2^p \; d_X$ follows directly from the repetition construction. It remains to show $d_Z^{(p)} = d_Z$: 
        \begin{enumerate}
        \item ${d}_Z^{(p)} \geq d_Z$:
        Let $(z_1, \ldots, z_{2^p})$ be a minimum weight vector in $ {({C}_2^{(p)})}^{\perp} \setminus {({C}_1^{(p)})}^{\perp}$ then,
        \begin{enumerate}
            \item for any $(x, \ldots, x) \in C_2^{(p)}$, we have: 
            \begin{equation*}
                0 \!\pmod{2} \; = \; w_H \big( (z_1, \ldots, z_{2^p}) \ast (x, \ldots, x) \big) \; = \; w_H \big((\oplus_{i=1}^{2^p} z_i) \ast x \big) \implies \oplus_{i=1}^{2^p} z_i \in C_2^\perp
            \end{equation*}
            \item there exists $(y, \ldots, y) \in C_1^{(p)}$ such that:
            \begin{equation*}
                1 \!\pmod{2} \; = \; w_H \big( (z_1, \ldots, z_{2^p}) \ast (y, \ldots, y) \big) \; = \; w_H \big( (\oplus_{i=1}^{2^p} z_i) \ast y \big) \implies \oplus_{i=1}^{2^p} z_i \notin C_1^\perp
            \end{equation*}
            \item Therefore, $\oplus_{i=1}^{2^p} z_i \in C_2^\perp \setminus C_1^\perp$, which implies that: 
            \begin{equation*}
               {d}_Z^{(p)} = \; w_H(z_1, \ldots, z_{2^p}) \; \geq \; w_H(\oplus_{i=1}^{2^p} z_i) \; \geq \; d_Z
            \end{equation*}
        \end{enumerate}
        (note that $z_i, x, y \in \mathbb{F}_2^{n-t}$) \\[2pt]
        \item $d_Z \geq {d}_Z^{(p)}$: Let $z$ be a minimum weight vector in $C_2^\perp \setminus C_1^\perp$ then,
        \begin{enumerate}
            \item for any $x \in C_2$, we have:
            \begin{equation*}
                w_H\big( (z, 0, \ldots, 0) \ast (x, \ldots, x) \big) \; = \; w_H( z \ast x) \; = \; 0 \!\pmod{2} \implies (z,0, \ldots, 0) \in {({C}_2^{(p)})}^\perp
            \end{equation*}           
            \item there exists $y \in C_1$ such that, $w_H(z \ast y) = 1\! \pmod{2}$. As a result:
            \begin{equation*}
                w_H \big( (z,0, \ldots, 0) \ast (y, \ldots, y) \big) \; = \; w_H(z \ast y) \;= \; 1\! \pmod{2} 
                \implies (z,0, \ldots, 0) \notin {({C}_1^{(p)})}^\perp
            \end{equation*}
            \item Therefore, $(z,0, \ldots, 0) \in {({C}_2^{(p)})}^\perp \setminus {({C}_1^{(p)})}^\perp $, which implies that:
            \begin{equation*}
                d_Z \; = \; w_H(z) \; = \;  w_H(z,0, \ldots, 0) \; \geq \; {d}_Z^{(p)}
            \end{equation*}
        \end{enumerate}
    \end{enumerate}
    \end{enumerate}
\end{proof}
\newpage
\begin{proof}[Proof of Lemma~\ref{lemma_16}] \text{ }\\
    Let $y_1, \ldots, y_t$ denote the rows of $G_{C_1/C_2}$. Then, the set $\big\{ \widetilde{y}_i = (y_i, \ldots, y_i) : i \in [t] \big\}$ forms a basis of the coset space $C_1^{(p)}/C_2^{(p)}$. For any $a = (a_1, \ldots, a_{{t}}) \in \mathbb{F}_2^{t}$, let $ \widetilde{y}_a = \oplus_{i=1}^{t} a_i \widetilde{y}_i \in C_1^{(p)} / C_2^{(p)}$. Applying the transversal $R_Z(\frac{\pi}{2^{l}})$ to the logical state $\ket{a}$, we obtain:
    \begin{align*}
        {R_Z \bracket{ \frac{\pi}{2^l}}}^{\otimes \bracket{ 2^p(n-{t})}} \; \ket{a}_L  \; &= \; \frac{1}{\sqrt{|{C}_2^{(p)} |}} \; \Bigg( \sum_{\widetilde{x} \in {C}_2^{(p)}} e^{\iota \frac{\pi}{2^l} w_H( \widetilde{x} \oplus \widetilde{y}_a)} \ket{\widetilde{x} \oplus \widetilde{y}_a} \Bigg) \\
        &= \; e^{\iota \frac{\pi}{2^l} w_H(\widetilde{y}_a)} \; \frac{1}{\sqrt{| {C}_2^{(p)} |}}  \; \Bigg( \sum_{\widetilde{x} \in {C}_2^{(p)}} e^{\iota \frac{\pi}{2^l} \bracket{ w_H(\widetilde{x}) -2w_H(\widetilde{x} \ast \widetilde{y}_a)}} \ket{\widetilde{x} \oplus \widetilde{y}_a}  \Bigg)
    \end{align*}
    By construction, any $\widetilde{x} \in C_2^{(p)}$ and $\widetilde{y}_a \in C_1^{(p)} / C_2^{(p)}$ decompose as $\widetilde{x} = (x, \ldots, x)$ and $\widetilde{y}_a = (y_a, \ldots, y_a)$, where $x \in C_2$ and $y_a = \oplus_{i=1}^{t} a_i y_i \in C_1 / C_2$. Proposition~\ref{proposition_22} implies that:
    \begin{align*}
        &w_H(x) = 0 \!\pmod{2^m}  \\
        &2 w_H(x \ast y_a) = 0 \!\pmod{2^m}  \\
        &w_H(y_a) =  -w_H(a) \!\pmod{2^m} 
    \end{align*}
    From the condition $l+1 \leq m + p $ and the equations above, we deduce:
    \begin{align*}
        &w_H(\widetilde{x}) = 2^p w_H(x) = 0 \!\pmod{2^{m+p}} = 0 \!\pmod{2^{l+1}}  \\
       &2w_H(\widetilde{x} \ast \widetilde{y}_a) = 2^{p+1} w_H(x \ast y_a) = 0 \!\pmod{2^{m+p}} = 0 \!\pmod{2^{l+1}} \\
       &w_H(\widetilde{y}_a) = 2^p w_H(y_a) = -2^p w_H(a)  \!\pmod{2^{m+p}} = -2^p w_H(a) \!\pmod{2^{l+1}} 
    \end{align*}
    This implies that the transversal $R_Z \bracket{ \frac{\pi}{2^l}}$ is a logical operator with the following logical action:
    \begin{equation*}
        {R_Z \bracket{\frac{\pi}{2^l} }}^{\otimes \bracket{ 2^p(n-{t})}} \; \ket{a}_L \; = \; e^{\iota \frac{\pi}{2^l} (-2^p w_H(a))}  \ket{a}_L  
    \end{equation*}
    If $l \leq p - 1$, this implements the logical identity. Otherwise,
    \begin{equation*}
        {R_Z \bracket{\frac{\pi}{2^l}} }^{\otimes \bracket{2^p(n-{t})}} \ket{a}_L \; = \; e^{-\iota \frac{\pi}{2^{l-p}}w_H(a)} \ket{a}_L \; =  \;\overline{{\bracket{{R_{Z} \bracket{ \frac{\pi}{2^{l-p}}}}^\dagger}}^{\otimes t}} \; \ket{a}_L
    \end{equation*}
    where $\overline{{\bracket{{R_{Z} \bracket{ \frac{\pi}{2^{l-p}}}}^\dagger}}^{\otimes t}}$ denotes the logical transversal ${R_{Z} \bracket{ \frac{\pi}{2^{l-p}}}}^\dagger$. 
\end{proof}

\newpage
\subsection{Physical Transversal $CZ$ realizes logical transversal $CZ$ across two code blocks, $(\css{(C_1, C_2)}, \css{(C_1, C_2)})$}\label{transversal_cz}
Let $y_1, \ldots, y_t$ denote the rows of $G_{C_1 / C_2}$. For any $a=(a_1, \ldots, a_t), b=(b_1, \ldots, b_t) \in \mathbb{F}_2^t$, let $y_a = \oplus_{i=1}^t a_i y_i, y_b = \oplus_{j=1}^t b_j y_j \in C_1/ C_2$. Applying the physical transversal $CZ$ across two code blocks yields:
\begin{equation}\label{eqn_171}
    {CZ}^{\otimes (n-t)} \ket{a}_L  \ket{b}_L \; = \; \frac{1}{|C_2|} \; \sum_{x_1 \in C_2}  \sum_{x_2 \in C_2}  {(-1)}^{w_H((x_1 \oplus y_a) \ast (x_2 \oplus y_b))}  \ket{x_1 \oplus y_a}  \ket{x_2 \oplus y_b}
\end{equation}
Expanding the phase exponent, we obtain:
\begin{equation*}
    w_H\big((x_1 \oplus y_a) \ast (x_2 \oplus y_b)\big) = w_H(x_1 \ast x_2) + w_H(x_1 \ast y_b) + w_H(y_a \ast x_2) + w_H(y_a \ast y_b) \pmod{2}
\end{equation*}
Proposition~\ref{proposition_22} (Item~2) establishes that vectors in $C_2$ and $C_1$ are mutually orthogonal. Consequently:
\begin{align*}
    w_H \big((x_1 \oplus y_a) \ast (x_2 \oplus y_b) \big) \; &= \; w_H(y_a \ast y_b) \pmod{2} \\
    &= \; w_H \Big( (\oplus_{i=1}^t a_i y_i) \ast (\oplus_{j=1}^t b_j y_j) \Big) \pmod{2} \\
    &= \; w_H \Big( \oplus_{i=1}^{t} \oplus_{j=1}^t a_i b_j y_i \ast y_j \Big) \pmod{2} \\
    &= \; \sum_{i=1}^{t} \sum_{j=1}^t a_i b_j w_H(y_i \ast y_j) \pmod{2} 
\end{align*}

Let $r_1, \ldots, r_k$ denote the rows of $A$. Proposition~\ref{proposition_6} establishes that $$w_H(y_i \ast y_j) = w_H((0^{k-t}, r_i) \ast (0^{k-t}, r_j)) = w_H(r_i \ast r_j) = \delta_{ij} \pmod{2}$$
Therefore, the phase exponent simplifies to:
\begin{equation*}
    w_H\big((x_1 \oplus y_a) \ast (x_2 \oplus y_b)\big) \; = \; \sum_{i=1}^{t} a_i b_i \!\pmod{2} 
\end{equation*}
Substituting this back into~\eqref{eqn_171} yields:
\begin{align*}
    {CZ}^{\otimes (n-t)}  \ket{a}_L \ket{b}_L \; &= \;  {(-1)}^{\sum_{i=1}^t a_i b_i}  \ket{a}_L  \ket{b}_L \\
    &= \;\overline{{CZ}^{\otimes t}}  \ket{a}_L \ket{b}_L
\end{align*}
where $\overline{{CZ}^{\otimes t}}$ denotes the logical transversal $CZ$. 

\newpage
\subsection{Proof of Theorem~\ref{theorem_3}, Lemma~\ref{lemma_21}, and Corollary~\ref{corollary_22}}\label{appendix_2}
\begin{proof}[Proof of Theorem~\ref{theorem_3}] \text{ }\\
The proof proceeds by analyzing the action of $T^{\otimes n}$ on the logical basis states. Let $\{y_1, \ldots, y_k\}$ be a basis of $C_1 / C_2$\footnote{Note that the choice of basis is immaterial to the proof}. 

\textit{Sufficiency}: For any $a = (a_1, \ldots, a_k) \in \mathbb{F}_2^k$, let $y_a = \oplus_{i=1}^k a_i y_i \in C_1 / C_2$. Applying $T^{\otimes n}$ to the logical state $\ket{a}$ in~\eqref{eqn_24} yields:
\begin{equation*}
    T^{\otimes n} \ket{a}_L \; = \; \frac{1}{\sqrt{|C_2|}} \; \Bigg( \sum_{x \in C_2}  {(-1)}^{w_H(s_X \ast x)} \; e^{\iota \frac{\pi}{4} w_H(x \oplus y_a \oplus s_Z)}  \ket{y_a \oplus x \oplus s_Z} \Bigg) 
\end{equation*}
Expanding the phase exponent, we obtain:
\begin{align*}
    w_H( x \oplus y_a \oplus s_Z) \; &= \; w_H(x) + w_H(y_a \oplus s_Z) - 2w_H(x \ast \bracket{y_a \oplus s_Z})  \\
    &= \; w_H(y_a \oplus s_Z) \!\!\pmod{8}
\end{align*}
where the last equality holds by~\eqref{eqn_25}. Thus, the transversal $T$ preserves $\ket{a}_L \in \mathbb{Q}_S$:
\begin{equation*}
    T^{\otimes n} \ket{a}_L \; = \; e^{\iota \frac{\pi}{4} w_H(y_a \oplus s_Z)} \ket{a}_L \in \mathcal{Q}_S
\end{equation*}
Since $\left\{ \ket{a}_L : a \in \mathbb{F}_2^k \right\}$ forms a basis of the code space $\mathcal{Q}_S$, the transversal $T$ is a logical operator.

\textit{Necessity}: Assume that $T^{\otimes n}$ is a logical operator. For any $y \in C_1$, we evaluate $T^{\otimes n} E(y,0) \ket{0^k}_L$ in two ways. First, since both $T^{\otimes n}$ and $E(y,0)$ are logical operators, they commute with the projector, $P_{\cS}$. From~\eqref{eqn_20}, it follows that:
\begin{align}
    T^{\otimes n} \; E(y,0) \ket{0^k}_L \; &= \; \sqrt{|C_2|} \; T^{\otimes n} \; E(y, 0) \; P_{\cS}  \ket{s_Z} \nonumber \\
    &= \; \sqrt{|C_2|} \; P_{\cS}  \; T^{\otimes n} \; E(y,0)  \ket{s_Z} \nonumber \\
    &= \; e^{\iota \frac{\pi}{4} w_H(y \oplus s_Z)} \; E(y,0) \ket{0^k}_L \nonumber \\
    &= \; \frac{1}{\sqrt{|C_2|}} \; e^{\iota \frac{\pi}{4} w_H(y \oplus s_Z)} \Bigg(  \sum_{x \in C_2} \; {(-1)}^{w_H(s_X \ast x)} \ket{y \oplus x \oplus s_Z}  \Bigg) \label{eqn_33b}
\end{align}
Second, we compute the action directly by applying $T^{\otimes n}$:
\begin{align}
    T^{\otimes n} \; E(y,0) \ket{0^k}_L \; &= \; \frac{1}{\sqrt{|C_2|}} \; \Bigg(  \sum_{x \in C_2} {(-1)}^{w_H(s_X \ast x)} \; e^{\iota \frac{\pi}{4} w_H(y \oplus x \oplus s_Z)}  \ket{y \oplus x \oplus s_Z} \Bigg) \nonumber \\
    &= \; \frac{1}{\sqrt{|C_2|}} \; e^{\iota \frac{\pi}{4} w_H(y \oplus s_Z)} \; \Bigg(  \sum_{x \in C_2}  {(-1)}^{w_H(s_X \ast x)} \; e^{\iota \frac{\pi}{4} \bracket{w_H(x) - 2 w_H(x \ast \bracket{y \oplus s_Z})}}  \ket{y \oplus x \oplus s_Z}  \Bigg) \label{eqn_35b}
\end{align}
For any $y \in C_1$, $\left\{ \ket{y \oplus x \oplus s_Z} : x \in C_2  \right\}$ is a linearly independent set in ${\bracket{\mathbb{C}^2}}^{\otimes n}$. Therefore, by equating~\eqref{eqn_33b} and~\eqref{eqn_35b}, we get that: for any $x \in C_2$,
\begin{equation*}
    w_H(x) - 2 w_H(x \ast (y \oplus s_Z)) = 0 \!\!\pmod{8}
\end{equation*}
\end{proof}

The following corollary establishes a CSS-T characterization equivalent to Theorem~\ref{theorem_3}. This characterization will simplify the proof of Lemma~\ref{lemma_21}.
\begin{corollary}\label{corollary_4}
    A CSS code $\css{(C_1, C_2, s_Z)}$ is a CSS-T code iff for all $x \in C_2$, $y \in C_1 / C_2$,
    \begin{equation}
        w_H(x) - 2 w_H \big(x \ast \bracket{ y \oplus s_Z} \big) = 0 \!\!\pmod{8}\label{eqn_26b}
    \end{equation}
\end{corollary}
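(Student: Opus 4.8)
The plan is to deduce Corollary~\ref{corollary_4} from Theorem~\ref{theorem_3} by showing that condition~\eqref{eqn_25}, which Theorem~\ref{theorem_3} imposes for all $y \in C_1$, is equivalent to its restriction to the (fixed) set of coset representatives $C_1/C_2$. One direction is free: since $C_1/C_2$ is a subset of $C_1$, if $\css(C_1,C_2,s_Z)$ is CSS-T then Theorem~\ref{theorem_3} gives~\eqref{eqn_25} for every $x \in C_2$ and every $y \in C_1$, hence in particular~\eqref{eqn_26b}. The content is the converse: assuming~\eqref{eqn_26b} holds for all $x \in C_2$ and all $y \in C_1/C_2$, I want to recover~\eqref{eqn_25} for an arbitrary $y \in C_1$, and then invoke Theorem~\ref{theorem_3}.

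So fix $y \in C_1$ and write $y = y_0 \oplus x_0$, where $y_0 \in C_1/C_2$ is the representative of the coset $y + C_2$ and $x_0 \in C_2$. Using that $\ast$ distributes over $\oplus$, the two-vector instance of the inclusion--exclusion identity~\eqref{eqn_1}, namely $w_H(u \oplus v) = w_H(u) + w_H(v) - 2 w_H(u \ast v)$, and the idempotence $x \ast x = x$, I would expand
\begin{equation*}
    w_H\big(x \ast (y \oplus s_Z)\big) = w_H\big(x \ast (y_0 \oplus s_Z)\big) + w_H(x \ast x_0) - 2 w_H\big(x \ast x_0 \ast (y_0 \oplus s_Z)\big),
\end{equation*}
whence
\begin{equation*}
    w_H(x) - 2 w_H\big(x \ast (y \oplus s_Z)\big) = \Big[ w_H(x) - 2 w_H\big(x \ast (y_0 \oplus s_Z)\big) \Big] - 2 \Big[ w_H(x \ast x_0) - 2 w_H\big(x \ast x_0 \ast (y_0 \oplus s_Z)\big) \Big].
\end{equation*}
The first bracket vanishes modulo $8$ by hypothesis~\eqref{eqn_26b} applied with the coset representative $y_0$, so it suffices to prove the weaker congruence $w_H(x \ast x_0) - 2 w_H\big(x \ast x_0 \ast (y_0 \oplus s_Z)\big) \equiv 0 \pmod{4}$.

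For this last congruence I would apply~\eqref{eqn_26b} (with $y = y_0$) to the three codewords $x$, $x_0$, and $x \oplus x_0$ of $C_2$, and take the combination (congruence for $x \oplus x_0$) $-$ (congruence for $x$) $-$ (congruence for $x_0$). Expanding $w_H\big((x\oplus x_0)\ast(y_0\oplus s_Z)\big)$ and $w_H(x\oplus x_0)$ with the same identity, every term cancels except $-2\big(w_H(x\ast x_0) - 2 w_H(x\ast x_0\ast(y_0\oplus s_Z))\big)$, which is therefore $\equiv 0 \pmod 8$; dividing by $2$ gives the desired mod-$4$ statement. Combining, $w_H(x) - 2 w_H\big(x \ast (y \oplus s_Z)\big) \equiv 0 \pmod 8$ for all $x \in C_2$ and all $y \in C_1$, so Theorem~\ref{theorem_3} yields that $\css(C_1, C_2, s_Z)$ is a CSS-T code. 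The only delicate point is the modular bookkeeping in the last step --- tracking which quantities are exact integer identities versus congruences and checking that the chosen linear combination annihilates all stray terms --- but it is entirely mechanical.
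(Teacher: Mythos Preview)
Your proof is correct and follows essentially the same route as the paper's: decompose an arbitrary $y\in C_1$ as a coset representative plus an element of $C_2$, expand via the two-term inclusion--exclusion identity, kill the first piece directly by hypothesis, and then obtain the residual mod-$4$ congruence by applying the hypothesis to $x$, $x_0$, and $x\oplus x_0$ and taking the alternating sum. The only cosmetic difference is that the paper further expands $w_H\big(x\ast x_0\ast(y_0\oplus s_Z)\big)$ into its $y_0$ and $s_Z$ components before matching, whereas you keep it bundled; your bookkeeping is slightly tidier but the argument is the same.
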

\begin{proof} \text{ }\\
The necessity follows immediately from~\eqref{eqn_25} of Theorem~\ref{theorem_3}. To prove sufficiency, we must show that~\eqref{eqn_26b} implies~\eqref{eqn_25}.

Note that any $y' \in C_1$ can be decomposed as $y' = x' \oplus y$, where $x' \in C_2$ and $y \in C_1/C_2$. For any $x \in C_2$, consider~\eqref{eqn_25}:
\begin{align}
    w_H(x) - 2 w_H\big(x \ast (y' \oplus s_Z) \big) &= w_H(x) - 2 w_H\big(x \ast (x' \oplus y \oplus s_Z)\big) \nonumber \\
    &= w_H(x) - 2w_H(x \ast x') -2 w_H(x \ast y) -2 w_H(x \ast s_Z) 
    + 4 w_H(x \ast x' \ast y) \nonumber \\
    &\hspace{1cm}+ 4 w_H(x \ast x' \ast s_Z) + 4 w_H(x \ast y \ast s_Z) \nonumber \\
    &= w_H(x) - 2 w_H(x \ast (y \oplus s_Z)) - 2w_H(x \ast x') + 4 w_H(x \ast x' \ast y) \nonumber \\
    &\hspace{1cm}+ 4 w_H(x \ast x' \ast s_Z) \nonumber \\
    &= - 2w_H(x \ast x') + 4 w_H(x \ast x' \ast y) + 4 w_H(x \ast x' \ast s_Z) \pmod{8} \ \  \label{eqn_42c}
\end{align}
where the last equality holds by~\eqref{eqn_26b}. Next, we apply the hypothesis~\eqref{eqn_26b} to $x \oplus x' \in C_2$ and $y \in C_1/C_2$:
\begin{align}
    0 \!\!\!\pmod{8}&= w_H(x \oplus x') - 2 w_H\big((x \oplus x') \ast (y \oplus s_Z) \big) \nonumber  \\
    &= w_H(x \oplus x') - 2 w_H \big( \bracket{x \ast (y \oplus s_Z)} \oplus \bracket{x' \ast (y \oplus s_Z)} \big) \nonumber \\
    &= w_H(x) + w_H(x') - 2w_H(x \ast x') - 2 w_H \big(x \ast (y \oplus s_Z)\big) - 2 w_H\big(x' \ast (y \oplus s_Z)\big)\nonumber \\
    &\hspace{1cm} + 4w_H\big(x \ast x' \ast (y \oplus s_Z)\big) \nonumber \\
    &= - 2w_H(x \ast x') + 4 w_H\big(x \ast x' \ast (y \oplus s_Z)\big) \pmod{8} \label{eqn_27a} \\
    &= -2 w_H(x \ast x') + 4 w_H(x \ast x' \ast y) + 4 w_H(x \ast x' \ast s_Z) \!\!\pmod{8}\label{eqn_47a}
\end{align}
where~\eqref{eqn_27a} follows from the hypothesis~\eqref{eqn_26b}. Substituting~\eqref{eqn_47a} into~\eqref{eqn_42c} completes the proof.
\end{proof}

\begin{proof}[Proof of Lemma~\ref{lemma_21}]\text{ }\\

    \textit{Sufficiency}: It suffices to show that~\eqref{eqn_28f}-\eqref{eqn_32f} implies~\eqref{eqn_26b}. For any $x = \oplus_{i=1}^{k_2} a_i x_i \in C_2$, $y = \oplus_{q=1}^k b_j y_j \in C_1 / C_2$, consider:
    \begin{align}
        w_H&(x) - 2 w_H \big( x \ast (y \oplus s_Z) \big) = w_H\Big( \oplus_{i=1}^{k_2} a_i x_i \Big) - 2w_H \Big( \Big(\oplus_{i=1}^{k_2} a_i x_i \Big) \ast \Big( \Big( \oplus_{q=1}^k b_q y_q \Big) \oplus s_Z  \Big) \Big)  \nonumber  \\
        &= w_H \Big( \oplus_{i=1}^{k_2} a_i x_i \Big) - 2w_H \Big( \Big( \oplus_{i=1}^{k_2} a_i x_i \ast s_Z \Big) \oplus \Big( \oplus_{i=1}^{k_2} \oplus_{q=1}^k a_i b_q x_i \ast y_q  \Big) \Big) \nonumber \\
        &= \sum_{i=1}^{k_2} a_i w_H(x_i) - 2 \sum_{i=1}^{k_2} \sum_{j=i+1}^{k_2} a_i a_j w_H(x_i \ast x_j) + 4 \sum_{i=1}^{k_2} \sum_{j=i+1}^{k_2} \sum_{p=j+1}^{k_2} a_i a_j a_p w_H(x_i \ast x_j \ast x_p) - 2 \sum_{i=1}^{k_2} a_i w_H(x_i \ast s_Z)  \nonumber \\
        &\hspace{0.8cm}- 2 \sum_{i=1}^{k_2} \sum_{q=1}^k a_i b _q w_H(x_i \ast y_q) + 4 \sum_{i=1}^{k_2} \sum_{j=i+1}^{k_2} a_i a_j w_H(x_i \ast x_j \ast s_Z) + 4 \sum_{i=1}^{k_2} \sum_{j=1}^{k_2} \sum_{q=1}^k a_i a_j b_q w_H(x_i \ast s_Z \ast x_j \ast y_q) \nonumber \\
        &\hspace{0.8cm}+ 4 \sum_{i=1}^{k_2} \sum_{q=1}^{k}  \sum_{r=q+1}^k a_i b_q b_r w_H(x_i \ast y_q \ast y_r) + 4 \sum_{i=1}^{k_2} \sum_{q=1}^{k} \sum_{j=i+1}^{k_2} \sum_{r=1}^{k} a_i a_j b_q b_r w_H(x_i \ast x_j \ast y_q \ast y_r) \!\!\!\pmod{8}\nonumber \\
        &= - 2 \sum_{i=1}^{k_2} \sum_{q=1}^k a_i b _q w_H(x_i \ast y_q)  + 4 \sum_{i=1}^{k_2} \sum_{j=1}^{k_2} \sum_{q=1}^k a_i a_j b_q w_H(x_i \ast s_Z \ast x_j \ast y_q) \nonumber \\ 
        &\hspace{0.8cm}+  4 \sum_{i=1}^{k_2} \sum_{q=1}^k \sum_{j=i+1}^{k_2} \sum_{r=1}^{k} a_i a_j b_q b_r w_H(x_i \ast x_j \ast y_q \ast y_r)  \!\!\!\pmod{8} \label{eqn_29b}   \\
        &= - 2 \sum_{i=1}^{k_2} \sum_{q=1}^k a_i b _q w_H(x_i \ast y_q)  + 4 \sum_{q=1}^{k} \Bigg( \sum_{i=1}^{k_2} a_i b_q w_H(x_i \ast s_Z \ast y_q) + 2 \sum_{i=1}^{k_2} \sum_{j=i+1}^{k_2}  a_i a_j b_q w_H(x_i \ast s_Z \ast x_j \ast y_q) \Bigg) \nonumber \\
        &\hspace{0.8cm} +  4 \sum_{i=1}^{k_2} \sum_{j=i+1}^{k_2} \Bigg( \sum_{q=1}^k  a_i a_j b_q  w_H(x_i \ast x_j \ast y_q) +  2\sum_{q=1}^k  \sum_{r=q+1}^{k} a_i a_j b_q b_r w_H(x_i \ast x_j \ast y_q \ast y_r) \Bigg) \!\!\!\pmod{8}\nonumber \\
        &=- 2 \sum_{i=1}^{k_2} \sum_{q=1}^k a_i b _q w_H(x_i \ast y_q)  + 4 \sum_{i=1}^{k_2} \sum_{q=1}^k a_i b_q w_H(x_i \ast s_Z \ast y_q) + 4 \sum_{i=1}^{k_2} \sum_{j=i+1}^{k_2} \sum_{q=1}^{k}  a_i a_j b_q  w_H(x_i \ast x_j \ast y_q) \!\!\!\pmod{8} \nonumber \\
        &= 0 \!\!\!\pmod{8} \label{eqn_30b}
    \end{align}
    where Eq.~\eqref{eqn_29b} follows from~\eqref{eqn_28f},~\eqref{eqn_29f},~\eqref{eqn_31f}, and~\eqref{eqn_32f}, while Eq.\eqref{eqn_30b} is obtained from~\eqref{eqn_30f} and~\eqref{eqn_33f}.
    
\textit{Necessity}: If ${\css(C_1, C_2, s_Z)}$ is a CSS-T code then  Corollary~\ref{corollary_4} implies that for all $x \in C_2$, $y \in C_1 / C_2$: 
\begin{equation}\label{eqn_60f}
    w_H(x) - 2w_H \big( x \ast (y \oplus s_Z) \big) = 0 \!\!\pmod{8}
\end{equation}
For $i \in [k_2]$, setting $x=x_i \in C_2$ and $y=0 \in C_1 / C_2$ in~\eqref{eqn_60f}, we obtain~\eqref{eqn_28f}:
\begin{eqnarray}
    w_H(x_i) - 2 w_H(x_i \ast s_Z) = 0 \!\!\!\pmod{8}\label{eqn_61f}
\end{eqnarray}
For $i \neq j \in [k_2]$, setting $x=x_i \oplus x_j \in C_2$, $y=0 \in C_1 / C_2$ in~\eqref{eqn_60f}, we obtain~\eqref{eqn_29f}:
\begin{align}
    0 \!\!\!\pmod{8}&= w_H(x_i \oplus x_j) - 2 w_H \big( (x_i \oplus x_j) \ast s_Z \big) \nonumber \\
    &= w_H(x_i) + w_H(x_j) - 2 w_H(x_i \ast x_j) -  2w_H(x_i \ast s_Z) -  2w_H(x_j \ast s_Z) +  4 w_H(x_i \ast x_j \ast s_Z) \nonumber \\
    &= -2w_H(x_i \ast x_j) + 4 w_H(x_i \ast x_j \ast s_Z) \!\!\!\pmod{8}  \label{eqn_64f}
\end{align} 
where Eq.~\eqref{eqn_64f} follows from~\eqref{eqn_61f}. 

For distinct $i,j,p \in [k_2]$, setting $x=x_i \oplus x_j \oplus x_p \in C_2$, $y=0 \in C_1 / C_2$ in~\eqref{eqn_60f}, we obtain~\eqref{eqn_31f}:
\begin{align}
    0 \!\!\!\pmod{8}&= w_H(x_i \oplus x_j \oplus x_p) - 2 w_H \big( (x_i \oplus x_j \oplus x_p) \ast s_Z \big) \nonumber \\
    &= w_H(x_i) + w_H(x_j) + w_H(x_p) - 2 w_H(x_i \ast x_j) -2w_H(x_i \ast x_p) - 2w_H(x_j \ast x_p) + 4 w_H(x_i \ast x_j \ast x_p) \nonumber \\
    &\hspace{0.9cm}-  2w_H(x_i \ast s_Z) -  2w_H(x_j \ast s_Z) -2w_H(x_p \ast s_Z)  +  4 w_H(x_i \ast x_j \ast s_Z) + 4 w_H(x_i \ast x_p \ast s_Z) \nonumber \\
    &\hspace{0.9cm}+ 4 w_H(x_j \ast x_p \ast s_Z)  \!\!\!\pmod{8} \nonumber  \\
    &= 4 w_H(x_i \ast x_j \ast x_p) \!\!\!\pmod{8} \label{eqn_34a}  
\end{align}
where Eq.~\eqref{eqn_34a} follows from~\eqref{eqn_61f} and~\eqref{eqn_64f}. 

For $i \in [k_2]$, $p \in [k]$, setting $x = x_i \in C_2$, $y = y_q \in C_1 / C_2$, we obtain~\eqref{eqn_30f}:
\begin{align}
   0 \!\!\!\pmod{8}&=  w_H(x_i) - 2 w_H \big( x_i \ast (y_q \oplus s_Z) \big) \nonumber \\
   &= w_H(x_i) - 2 w_H(x_i \ast y_q) - 2 w_H(x_i \ast s_Z) + 4 w_H(x_i \ast y_q \ast s_Z) \nonumber \\
   &= - 2 w_H(x_i \ast y_j) + 4 w_H(x_i \ast y_q \ast s_Z) \pmod{8} \label{eqn_70f}
\end{align}
where Eq.~\eqref{eqn_70f} follows from~\eqref{eqn_61f}.

For $ i\neq j \in [k_2]$, $p \in [k]$, setting $x = x_i \oplus x_j \in C_2$, $y = y_q \in C_1 / C_2$, we obtain~\eqref{eqn_33f}:
\begin{align}
    0 \!\!\!\pmod{8}&=  w_H(x_i \oplus x_j) - 2 w_H \big(  (x_i \oplus x_j) \ast (y_q \oplus s_Z) \big) \nonumber \\
    &= w_H(x_i) + w_H(x_j) -2w_H(x_i \ast x_j) - 2 w_H(x_i \ast y_q) - 2_H(x_i \ast s_Z) -2w_H(x_j \ast y_q) -2w_H(x_j \ast s_Z) \nonumber \\  
   &\hspace{0.9cm}+ 4w_H(x_i \ast y_q \ast s_Z) + 4 w_H(x_i \ast y_q \ast x_j) + 4 w_H(x_i \ast y_q \ast x_j \ast s_Z) + 4 w_H(x_i \ast s_Z \ast x_j \ast y_q) \nonumber \\
   &\hspace{0.9cm}+ 4 w_H(x_i \ast s_Z \ast x_j) + 4 w_H(x_j \ast y_q \ast s_Z) \pmod{8}\nonumber \\
   &= 4 w_H(x_i \ast y_q \ast x_j) \!\!\!\pmod{8} \label{eqn_36b}
\end{align}
where Eq.~\eqref{eqn_36b} follows from~\eqref{eqn_61f},~\eqref{eqn_64f} and~\eqref{eqn_70f}.

For $q \neq r \in [k]$, $ i \in [k_2]$, setting $x = x_i  \in C_2$, $y = y_q \oplus y_r \in C_1 / C_2$, we obtain~\eqref{eqn_32f}:
\begin{align}
    0 \!\!\!\pmod{8}&= w_H(x_i) - 2w_H \big(x_i \ast (y_q \oplus y_r \oplus s_Z) \big) \nonumber \\
    &= w_H(x_i) - 2w_H(x_i \ast y_q) -2 w_H(x_i \ast y_r) -2w_H(x_i \ast s_Z) + 4 w_H(x_i \ast y_q \ast y_r) \nonumber \\
    & \hspace{0.9cm}+ 4 w_H(x_i \ast y_q \ast s_Z)  + 4w_H(x_i \ast y_r \ast s_Z) \pmod{8} \nonumber \\
    &= 4 w_H(x_i \ast y_q \ast y_r) \!\!\!\pmod{8} \label{eqn_37}
\end{align}
where Eq.~\eqref{eqn_37} follows from~\eqref{eqn_61f} and~\eqref{eqn_70f}. This concludes the proof.
\end{proof}

\begin{proof}[Proof of Corollary~\ref{corollary_22}]\text{ }\\
    Let $\left\{  x_1, \ldots, x_{k_2}\right\}$ be a basis of $C_2$ and $\left\{ y_1, \ldots, y_k \right\}$ be a basis of $C_1 / C_2$. If ${\css{(C_1, C_2,s_Z)}}$ is a CSS-T code, Lemma~\ref{lemma_21} implies that for distinct $i,j,p \in [k_2]$, $q, r \in [k]$, the following holds:
    \begin{align}
         w_H(x_i) &= 0 \!\pmod{2}, & w_H(x_i \ast x_j) &=  0 \!\pmod{2}, &w_H(x_i \ast x_j \ast x_p) &=  0 \!\pmod{2}, \label{eqn_79f} \\
         w_H(x_i \ast y_q) &= 0 \!\pmod{2}, & w_H(x_i \ast x_j \ast y_q)  &= 0 \!\pmod{2}, & w_H(x_i \ast y_q \ast y_r) &= 0 \!\pmod{2}. \label{eqn_80f}
    \end{align}
    The Schur product $C_2 \ast C_1$ is spanned by the component-wise products of the basis vectors of $C_2$ and $C_1$:
    \begin{equation}\label{eqn_83f}
        C_2 \ast C_1 = \big\langle \left\{ x_i : i \in [k_2] \right\} \cup \left\{ x_i \ast x_j : i \neq j \in [k_2] \right\} \cup \left\{ x_i \ast y_q : i \in [k_2], q \in [k] \right\}  \big\rangle
    \end{equation}
    The conditions in~\eqref{eqn_79f} and~\eqref{eqn_80f} verify that each generator of $C_2 \ast C_1$ in~\eqref{eqn_83f} is orthogonal to all basis vectors of $C_1$. Consequently, $C_2 \ast C_1 \subseteq C_1^{\perp}$.
\end{proof}

\newpage 

\subsection{Proof of Theorem~\ref{theorem_19}, Theorem~\ref{theorem_20}, and Lemma~\ref{lemma_19}}\label{appendix_d}
The following lemma describes the logical action induced by the transversal $T$ on a CSS-T code.
\begin{lemma}\label{lemma_6}
    Let $\css{(C_1, C_2, s_Z)}$ be an ${[[n,k]]}_2$ CSS-T code. Let $\{ y_1, \ldots, y_k \}$ be a basis of ${C_1/C_2}$. For any $a = (a_1, \ldots, a_k) \in \mathbb{F}_2^k$, let $y_a = \oplus_{i=1}^k a_i y_i \in C_1 / C_2$. The logical action of the transversal $T$ on the encoded state $\ket{a}$ is given by:
    \begin{equation*}
       T^{\otimes n} \ket{a}_L \; = \; \underbrace{e^{\iota \frac{\pi}{4} w_H(s_Z)}}_{\text{global phase}} \cdot \underbrace{e^{\iota \frac{\pi}{4} \bracket{w_H(y_a) - 2 w_H(y_a \ast s_Z)}}}_{\text{phase on logical qubits}} \ket{a}_L
    \end{equation*}
\end{lemma}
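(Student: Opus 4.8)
The plan is to apply $T^{\otimes n}$ directly to the computational-basis expansion of $\ket{a}_L$ and then use the CSS-T characterization to collapse all the per-term phases into a single scalar.

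First, recall from~\eqref{eqn_24} that $\ket{a}_L = \frac{1}{\sqrt{|C_2|}}\sum_{x\in C_2}(-1)^{w_H(s_X\ast x)}\ket{y_a\oplus x\oplus s_Z}$, where $y_a=\oplus_{i=1}^k a_i y_i$ is the fixed coset representative. Since $T$ acts diagonally in the computational basis with $T\ket{0}=\ket{0}$ and $T\ket{1}=e^{\iota\pi/4}\ket{1}$, for any $v\in\mathbb{F}_2^n$ we have $T^{\otimes n}\ket{v}=e^{\iota\frac{\pi}{4}w_H(v)}\ket{v}$. Applying this to each summand gives
\begin{equation*}
  T^{\otimes n}\ket{a}_L = \frac{1}{\sqrt{|C_2|}}\sum_{x\in C_2}(-1)^{w_H(s_X\ast x)}\,e^{\iota\frac{\pi}{4}w_H(y_a\oplus x\oplus s_Z)}\ket{y_a\oplus x\oplus s_Z}.
\end{equation*}

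Next I would expand the weight in the exponent by grouping $y_a\oplus s_Z$ together and using $w_H(u\oplus v)=w_H(u)+w_H(v)-2w_H(u\ast v)$, obtaining
\begin{equation*}
  w_H(y_a\oplus x\oplus s_Z)=w_H(y_a\oplus s_Z)+\bigl(w_H(x)-2w_H(x\ast(y_a\oplus s_Z))\bigr).
\end{equation*}
Because $\css(C_1,C_2,s_Z)$ is a CSS-T code with $x\in C_2$ and $y_a\in C_1/C_2$, Corollary~\ref{corollary_4} (equivalently Theorem~\ref{theorem_3}) yields $w_H(x)-2w_H(x\ast(y_a\oplus s_Z))\equiv 0\pmod 8$, so $e^{\iota\frac{\pi}{4}(w_H(x)-2w_H(x\ast(y_a\oplus s_Z)))}=1$. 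Hence the phase $e^{\iota\frac{\pi}{4}w_H(y_a\oplus x\oplus s_Z)}=e^{\iota\frac{\pi}{4}w_H(y_a\oplus s_Z)}$ is independent of $x$ and factors out of the sum, leaving exactly $\ket{a}_L$; thus $T^{\otimes n}\ket{a}_L=e^{\iota\frac{\pi}{4}w_H(y_a\oplus s_Z)}\ket{a}_L$.

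Finally, applying the same weight-expansion identity once more, $w_H(y_a\oplus s_Z)=w_H(s_Z)+w_H(y_a)-2w_H(y_a\ast s_Z)$, so $e^{\iota\frac{\pi}{4}w_H(y_a\oplus s_Z)}=e^{\iota\frac{\pi}{4}w_H(s_Z)}\cdot e^{\iota\frac{\pi}{4}(w_H(y_a)-2w_H(y_a\ast s_Z))}$, which is precisely the claimed factorization into a global phase and a phase depending only on $a$. I do not expect a genuine obstacle here: the proof is short, and the only point needing care is invoking the CSS-T condition in the coset form of Corollary~\ref{corollary_4} (or with $y_a\in C_1$ in Theorem~\ref{theorem_3}) and noting that $y_a$ is the fixed — but arbitrary — coset representative used throughout, so the resulting phase on the logical qubits is well defined.
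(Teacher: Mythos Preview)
Your proof is correct, but it takes a somewhat different route from the paper's. The paper exploits that both $T^{\otimes n}$ and $E(y_a,0)$ are logical operators and hence commute with the code projector $P_{\cS}$: writing $\ket{a}_L = E(y_a,0)\,P_{\cS}\ket{s_Z}$, it pushes $T^{\otimes n}$ past $P_{\cS}$ to act directly on the single computational-basis vector $\ket{y_a\oplus s_Z}$, picking up the phase $e^{\iota\frac{\pi}{4}w_H(y_a\oplus s_Z)}$ in one step, and then expands that weight as you do at the end. Your approach instead works term-by-term on the superposition~\eqref{eqn_24} and invokes the mod-$8$ congruence of Theorem~\ref{theorem_3}/Corollary~\ref{corollary_4} explicitly to collapse the $x$-dependent phases---this is essentially the sufficiency calculation in the proof of Theorem~\ref{theorem_3}, now used to read off the common scalar. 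The paper's argument is slicker in that it only uses the \emph{hypothesis} ``$T^{\otimes n}$ is a logical operator'' abstractly (via projector commutation) rather than re-opening the quantitative CSS-T condition; your argument is more self-contained and makes transparent exactly where the CSS-T congruence enters. Either way the final phase factorization is identical.
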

\begin{proof}\text{ }\\
    Since $T^{\otimes n}$ and $E(y_a, 0)$ are logical operators, they commute with the code projector. From~\eqref{eqn_20} and~\eqref{eqn_24}, it follows that:
    \begin{align*}
        T^{\otimes n} \ket{a}_L &= \; \sqrt{|C_2|} \; T^{\otimes n} \; E(y_a, 0) \; P_{\cS}  \ket{s_Z} \\
        &= \; \sqrt{|C_2|} \; P_{\cS} \; T^{\otimes n} \; E(y_a, 0) \ket{s_Z} \\
        &= \; e^{\iota \frac{\pi}{4} w_H(y_a \oplus s_Z)} \; \sqrt{|C_2|} \; E(y_a, 0) \;  P_{\cS}  \ket{s_Z} \\
        &= \; e^{\iota \frac{\pi}{4} w_H(s_Z)} \; e^{\iota \frac{\pi}{4} \bracket{w_H(y_a) -2w_H(y_a \ast s_Z)}} \ket{a}_L 
    \end{align*}
\end{proof}

\begin{proof}[Proof of Theorem~\ref{theorem_19}]\text{ }\\
    First, observe that if an operator acts as the logical identity in one basis, it does so in any basis. Let $\{y_1, \ldots, y_k\}$ be a basis of the coset space $C_1 / C_2$.
    
    \textit{Sufficiency}: Given $\css{(C_1, C_2,s_Z)}$ is a CSS-T code. By Lemma~\ref{lemma_6}, for any $a  = (a_1, \ldots, a_k)\in \mathbb{F}_2^k$ and $y_a = \oplus_{i=1}^{k}a_i y_i \in C_1 /C_2$, we have:
    \begin{align*}
        T^{\otimes n}  \ket{a}_L \; &= \; e^{\iota \frac{\pi}{4} w_H(s_Z)}  \; e^{\iota \frac{\pi}{4} (w_H(y_a) - 2w_H(y_a \ast s_Z))} \ket{a}_L \\
        &= \; e^{\iota \frac{\pi}{4} w_H(s_Z)} \ket{a}_L 
    \end{align*}
    where the above equality holds by~\eqref{eqn_55}. Thus, $T^{\otimes n}$ acts as the logical identity with global phase $e^{\iota \frac{\pi}{4} w_H(s_Z)}$. 
    
    \textit{Necessity}: Suppose $T^{\otimes n}$ realizes the logical identity. Then there exists a global phase $\gamma_T$ such that $T^{\otimes n} \ket{a}_L = \gamma_T \ket{a}_L$ for all $a \in \mathbb{F}_2^k$.
    First, consider the zero logical state $\ket{0^k}$ (corresponding to $y_a = 0$). Lemma~\ref{lemma_6} implies:
    \begin{equation*}
        T^{\otimes n}  \ket{0^k}_L \; = \; e^{\iota \frac{\pi}{4} w_H(s_Z)}  \ket{0^k}_L \implies \gamma_T \; = \; e^{\iota \frac{\pi}{4} w_H(s_Z)}
    \end{equation*}
    Now consider any $y \in C_1/C_2$. Let $\ket{a}$ be the logical state associated with $y$ (i.e., $y_a = y$). Equating the action from Lemma~\ref{lemma_6} with $\gamma_T \ket{a}_L$:
    \begin{equation*}
        \gamma_T \; e^{\iota \frac{\pi}{4} (w_H(y) - 2w_H(y \ast s_Z))}  \ket{a}_L \; = \;  \gamma_T \ket{a}_L
    \end{equation*}
    Since $\ket{a}_L \neq 0$, we obtain:
    \begin{equation*}
        w_H(y) - 2w_H(y \ast s_Z) = 0 \!\pmod{8}
    \end{equation*}
\end{proof}

\begin{proof}[Proof of Theorem~\ref{theorem_20}] \text{ }\\
\textit{Sufficiency}: Given $\css{(C_1, C_2,s_Z)}$ is a CSS-T code. By Lemma~\ref{lemma_6}, for any $a  = (a_1, \ldots, a_k)\in \mathbb{F}_2^k$ and $y_a = \oplus_{i=1}^{k}a_i y_i \in C_1 /C_2$, we have:
\begin{align*}
    T^{\otimes n} \ket{a}_L \; &= \; e^{\iota \frac{\pi}{4} w_H(s_Z)}  \; e^{\iota \frac{\pi}{4} \bracket{w_H(y_a) - 2w_H(y_a \ast s_Z)}} \ket{a}_L \\
    &= \; e^{\iota \frac{\pi}{4} w_H(s_Z)} \; e^{\iota \frac{\pi}{4} w_H(a)} \ket{a}_L 
\end{align*}
where the above equality holds by~\eqref{eqn_17}. Thus, $T^{\otimes n}$ acts as the logical transversal $T$ with global phase $e^{\iota \frac{\pi}{4} w_H(s_Z)}$. 

 \textit{Necessity}: Suppose $T^{\otimes n}$ realizes the logical transversal $T$. Then there exists a global phase $\gamma_T$ such that $T^{\otimes n}\ket{a}_L  = \gamma_T \; e^{\iota \frac{\pi}{4} w_H(a)}  \ket{a}_L$ for all $a \in \mathbb{F}_2^k$.
    First, consider the zero logical state $\ket{0^k}$ (corresponding to $y_a = 0$). Lemma~\ref{lemma_6} implies:
    \begin{equation*}
        T^{\otimes n} \ket{0^k}_L \; = \; e^{\iota \frac{\pi}{4} w_H(s_Z)} \ket{0^k}_L \implies \gamma_T\;  = \; e^{\iota \frac{\pi}{4} w_H(s_Z)}
    \end{equation*}
    Now consider any $a\in \mathbb{F}_2^k$ and the corresponding coset representative $y_a \in C_1/C_2$. Equating the action from Lemma~\ref{lemma_6} with $\gamma_T \; e^{\iota \frac{\pi}{4} w_H(a)} \ket{a}_L$:
    \begin{equation*}
        \gamma_T \; e^{\iota \frac{\pi}{4} \bracket{w_H(y_a) - 2w_H(y_a \ast s_Z)}} \ket{a}_L \; = \;  \gamma_T \; e^{\iota \frac{\pi}{4} w_H(a)}  \ket{a}_L
    \end{equation*}
    Since $\ket{a}_L \neq 0$, we obtain:
    \begin{equation*}
        w_H(y_a) - 2w_H(y_a \ast s_Z) \; = \; w_H(a) \!\pmod{8}
    \end{equation*}
\end{proof}

\begin{proof}[Proof of Lemma~\ref{lemma_19}]\text{ }\\
    It suffices to show that $\css{({C}_1^{(1)}, {C}_2^{(1)},s_Z =(1, 0))}$, satisfies the CSS-T characterization in~\eqref{eqn_25}. For any $\widetilde{x}=(x,x) \in {C}_2^{(1)}, \ \ \widetilde{y}=(y,y) \in {C}_1^{(1)}$, we have:
    \begin{align*}
        w_H(\widetilde{x})  -  2 w_H \big(\widetilde{x} \ast (\widetilde{y} \oplus s_Z) \big) \; = \; 2 w_H(x)  -  2 w_H \big(x \ast (y \oplus {1}), x \ast y \big) \; = \; 0
    \end{align*}
    To show that the transversal $T$ realizes the logical identity, it suffices to verify~\eqref{eqn_55}. Indeed, for any $\widetilde{z} = (z,z) \in {C}_1^{(1)} / {C}_2^{(1)}$:
    \begin{equation*}
        w_H(\widetilde{z})  -  2 w_H(\widetilde{z} \ast s_Z) \; = \; 2w_H(z)  -  2w_H(z) \; = \; 0
    \end{equation*}
\end{proof}
\end{document}